\newtheorem{mytheorem}{Theorem}
\newtheorem{mylemma}{Lemma}
\theoremstyle{definition}
\newtheorem{mydefinition}[mylemma]{Definition}
\newtheorem{example}[mylemma]{Example}
\newtheorem{construction}{Construction}
\theoremstyle{remark}
\newcommand{\textover}[3][l]{%
 \makebox[\widthof{#3}][#1]{#2}%
 }
\newcommand{\NMQCp}{\ensuremath{\mathrm{NMQC}_\oplus}\xspace}
\newcommand{\ENMQCp}{\ensuremath{\mathsf{ENMQC}_\oplus}\xspace}
\newcommand{\ZZZZ}{\ensuremath{\mathbb{Z}/4\mathbb{Z}}\xspace}
\newcommand{\Z}[1]{\ensuremath{\mathbb{Z}/#1\mathbb{Z}}\xspace}
\newcommand{\ACpz}{\ensuremath{\mathsf{AC_\oplus^0}}\xspace}
\newcommand{\ACCz}{\ensuremath{\mathsf{ACC^0}}\xspace}
\newcommand{\TC}{\ensuremath{\mathsf{TC}}\xspace}
\newcommand{\TCz}{\ensuremath{\mathsf{TC^0}}\xspace}
\newcommand{\NP}{\ensuremath{\mathsf{NP}}\xspace}
\renewcommand{\P}{\ensuremath{\mathsf{P}}\xspace}
\newcommand{\BQP}{\ensuremath{\mathsf{BQP}}\xspace}
\newcommand{\pfs}{\ensuremath{\mathsf{pfs}}\xspace}
\newcommand{\QNCfz}{\ensuremath{\mathsf{QNC_f^0}}\xspace}
\newcommand{\AWPP}{\ensuremath{\mathsf{AWPP}}\xspace}
\title{Periodic Fourier Representation of Boolean Functions}
\author{Ryuhei Mori}
\date{\small \texttt{mori@c.titech.ac.jp}\\ School of Computing, Tokyo Institute of Technology}
\begin{document}
\begin{titlingpage}
\maketitle

\begin{abstract}
In this work, we consider a new type of Fourier-like representation of Boolean function $f\colon\{+1,-1\}^n\to\{+1,-1\}$
\begin{equation*}
f(x) = \cos\left(\pi\sum_{S\subseteq[n]}\phi_S \prod_{i\in S} x_i\right).
\end{equation*}
This representation, which we call the periodic Fourier representation, of Boolean function is closely related to a certain type of multipartite Bell inequalities and non-adaptive measurement-based quantum computation with linear side-processing (\NMQCp).
The minimum number of non-zero coefficients in the above representation, which we call the periodic Fourier sparsity, is equal to the required number of qubits for the exact computation of $f$ by \NMQCp.
Periodic Fourier representations are not unique, and can be directly obtained both from the Fourier representation and the $\mathbb{F}_2$-polynomial representation.
In this work, we first show that Boolean functions related to $\ZZZZ$-polynomial have small periodic Fourier sparsities.
Second, we show that the periodic Fourier sparsity is at least $2^{\deg_{\mathbb{F}_2}(f)}-1$, which means that \NMQCp efficiently computes a Boolean function $f$ if and only if $\mathbb{F}_2$-degree of $f$ is small.
Furthermore, we show that any symmetric Boolean function, e.g., $\mathsf{AND}_n$, $\mathsf{Mod}^3_n$, $\mathsf{Maj}_n$, etc, can be exactly computed by depth-2 \NMQCp using a polynomial number of qubits, that implies exponential gaps between \NMQCp and depth-2 \NMQCp.
\end{abstract}
\end{titlingpage}

\section{Introduction}
\subsection{Periodic Fourier representation}
Fourier analysis of Boolean function is a powerful tool used in theoretical computer science~\cite{odonnell2014analysis}.
A Boolean function $f\colon\{+1,-1\}^n\to\{+1,-1\}$ can be represented by a unique $\mathbb{R}$-multilinear polynomial
\begin{equation}\label{eq:fourier}
f(x) = \sum_{S\subseteq[n]} \widehat{f}(S)\prod_{i\in S} x_i
\end{equation}
using Fourier coefficients $(\widehat{f}(S)\in\mathbb{R})_{S\subseteq[n]}$ where $[n] := \{1,2,\dotsc,n\}$.
Here, the number $\bigl|\{S\in[n]\mid \widehat{f}(S)\ne 0\}\bigr|$ of non-zero Fourier coefficients, called the Fourier sparsity, is one of the important complexity measures of Boolean functions,
which means the number of $\mathbb{F}_2$-linear Boolean functions correlated to $f$.
On the other hand, another natural complexity measure of Boolean function is the linear sketch complexity~\cite{kannan2016linear}, which is the smallest number $k$ such that
there exists a Boolean function $g\colon\{+1,-1\}^k\to\{+1,-1\}$ and $S_1,\dotsc,S_k\subseteq[n]$ satisfying
\begin{equation}\label{eq:lsketch}
f(x) = g\left(\prod_{i\in S_1} x_i, \dotsc, \prod_{i\in S_k}x_i\right).
\end{equation}
In fact, the linear sketch complexity is equal to the Fourier dimension, which is a dimension of a linear space spanned by $\{S\subseteq[n]\mid \widehat{f}(S)\ne 0\}$ where a subset $S$ is regarded as a vector in $\mathbb{F}_2^n$~\cite{montanaro2009ccxor}.
Importantly, the linear sketch complexity has the above operational definition, and also has another operational characterization which is the one-way communication complexity of $f^\oplus(x, y) := f(x\oplus y)$~\cite{montanaro2009ccxor}.

Here, the Fourier representation~\eqref{eq:fourier} can be regarded as a restriction of general linear sketch~\eqref{eq:lsketch} where $g$ must be $\mathbb{R}$-linear.
In this work, we consider a different type of restriction where $g$ must be the cosine function of $\mathbb{R}$-linear function, i.e.,
\begin{equation}\label{eq:tlinear}
f(x) = \cos\left(\pi\sum_{S\subseteq[n]}\phi_S \prod_{i\in S} x_i\right).
\end{equation}
Here, the constant factor $\pi$ is not essential, but introduced for the simplicity of analysis.
We call \eqref{eq:tlinear} the \textit{periodic Fourier representation}.
The number $|\{S\subseteq[n]\mid S\ne \varnothing, \phi_S \ne 0\}|$ of non-zero coefficients except for that corresponding to the empty set is the complexity measure which we will consider in this work,
and call the \textit{periodic Fourier sparsity}.
The periodic Fourier sparsity is operationally characterized as the required number of qubits for computing $f$ exactly by non-adaptive measurement-based quantum computation with linear side-processing (\NMQCp)~\cite{PhysRevA.64.032112}, \cite{hoban2011non}.
This fact is a consequence of Werner and Wolf's theorem~\cite{PhysRevA.64.032112}.
Werner and Wolf showed that for given Boolean function $h\colon \{+1,-1\}^k\to\{+1,-1\}$ and input distribution $\mu$ on $\{+1,-1\}^k$, the largest bias of winning probability of $k$-player XOR game $(h, \mu)$ in quantum theory is equal to
\begin{equation}\label{eq:werner}
\max_{\phi_0,\dotsc,\phi_k}\, \sum_{z_1,\dotsc, z_k} \mu(z_1,\dotsc,z_k) h(z_1,\dotsc,z_k) \cos\left(\pi\left(\phi_0+\sum_{i=1}^k\phi_i z_i \right)\right).
\end{equation}
This largest winning probability is achieved by using shared $k$-qubit GHZ state and local measurements $\cos(\pi(\phi_i z_i+\phi_0/k))X + \sin(\pi(\phi_iz_i+\phi_0/k))Y$ where $X$ and $Y$ are the Pauli matrices~\cite{PhysRevA.64.032112}.
If we assume that $z_1,\dotsc,z_k$ are parities of hidden inputs $x_1,\dotsc,x_n$, then the situation of XOR game is equivalent to that of \NMQCp~\cite{hoban2011non}.
Hence, \NMQCp exactly computes $f$ by using $k$ qubits if and only if the periodic Fourier sparsity of $f$ is at most $k$.

\subsection{Non-adaptive measurement-based quantum computation with linear side-processor}
Measurement-based quantum computation (MBQC) is a model of quantum computation based on qubit-wise measurements of prepared state which is independent of input (actually independent also of problem in the following works).
When measurement outcomes are used for choices of other measurements, we say that the MBQC algorithm is adaptive.
Raussendorf et al.\ showed that MBQC with adaptive measurements and linear side-processing using a cluster state can simulate quantum circuit with small overhead~\cite{PhysRevA.68.022312}.
Hoban et al.\ considered and analyzed non-adaptive MBQC with linear side-processing (\NMQCp)~\cite{hoban2011non}.
Their results will be briefly introduced in the next section.
In the rest of this section, we explain a definition of \NMQCp for computing a Boolean function $f\colon \{0,1\}^n\to\{0,1\}$.
Before an input is given, we prepare a $k$-qubit quantum state $\rho$ for some positive integer $k$ and two binary measurements $A^0_i$ and $A^1_i$ for each qubit indexed by $i\in[k]$.
After an input $\mathsf{x}\in\{0,1\}^n$ is given, the linear side-processor computes $k$ parities $\mathsf{z}_1 := \bigoplus_{i\in S_1} \mathsf{x}_i,\dotsc,\mathsf{z}_k := \bigoplus_{i\in S_k}\mathsf{x}_k$ for fixed subsets $S_1,\dotsc,S_k\subseteq [n]$ which are independent of input $\mathsf{x}$.
Then, $i$-th qubit of $\rho$ is measured by $A^{\mathsf{z}_i}_i$ for each $i\in[k]$ independently.
Finally, the linear side-processor computes a parity of all measurement outcomes, which is the final output of the \NMQCp algorithm and should be equal to $f(\mathsf{x})$.
Hence, \NMQCp algorithm is specified by a positive integer $k$, a prepared $k$-qubit state $\rho$, prepared measurements $A^0_i$ and $A^1_i$ for $i\in[k]$ and subsets $S_1,\dotsc,S_k\subseteq[n]$.

If we consider \NMQCp with the minimum error probability on given input distribution,
Werner and Wolf's theorem implies that we can safely assume that the prepared quantum state $\rho$ is the generalized GHZ state $(\ket{0\dotsm0}+\ket{1\dotsm1})/\sqrt{2}$
and the binary measurements are $\cos(\pi(\phi_i z_i + \phi_0/k))X + \sin(\pi(\phi_i z_i + \phi_0/k)) Y$ for some parameters $\phi_0,\dotsc,\phi_k$ where $z_i = 1-2\mathsf{z}_i$ for $i\in[k]$.
Especially, we can exactly compute a Boolean function $f\colon\{0,1\}^n\to\{0,1\}$ by \NMQCp using $k$ qubits if and only if the periodic Fourier sparsity of $f$ is at most $k$.

\subsection{Background: Foundation of quantum physics by computational complexities}
While quantum physics is described by extremely simple mathematics, quantum physics does not have operationally meaningful axioms (often called ``postulates'' rather than axioms).
Recently, quantum physics is believed to be explained by ``information processing''.
Some postulates based on information processing have been suggested~\cite{PhysRevLett.96.250401}, \cite{pawlowski2009information}, \cite{PhysRevA.84.012311}.
On the other hand, postulates based on ``computational complexity'' have not been investigated sufficiently.
Recently, Barrett et al.\ showed that generalized probabilistic theories, which are ``theories'' including quantum theory and obeying a general framework of theories based on weak assumptions, can solve problems in \AWPP~\cite{barrett2017computational}.
Hence, postulates on computational complexity such as ``Nature does not allow us to solve \NP-hard problem efficiently'' could be a candidate of postulates for quantum physics since some of generalized probabilistic theories violate this postulate unless $\AWPP\subseteq\NP$.
For discussing computations in generalized probabilistic theories, states, measurements and operations have to be defined for multipartite system~\cite{PhysRevA.75.032304}.
On the other hand, in measurement-based computations, we only needs concepts of states and measurements in multipartite system.
Hence, it would be clearer to argue measurement-based computation rather than a standard computation in generalized probabilistic theories since we do not have to define a set of allowed operations in generalized probabilistic theories.

Raussendorf et al.\ showed that adaptive MBQC with linear side-processing using a polynomial-size cluster state can simulate polynomial-size quantum circuit~\cite{PhysRevA.68.022312}.
Anders and Browne observed that adaptive MBQC with linear side-processing using polynomially many tripartite GHZ states can simulate polynomial-size classical circuit~\cite{PhysRevLett.102.050502}.
Raussendorf showed that adaptive measurement-based classical computation can compute only affine Boolean functions~\cite{PhysRevA.88.022322}.
Hoban et al.\ showed that \NMQCp can compute arbitrary Boolean function by using exponentially large generalized GHZ state~\cite{hoban2011non} on the basis of Werner and Wolf's theorem~\cite{PhysRevA.64.032112}.
Furthermore, they showed that the exact computation of $\mathsf{AND}_n$ by \NMQCp requires $2^n-1$ qubits, which means that computational power of efficient \NMQCp is limited.
On the other hand, measurement-based computation with linear side-processing in general no-signaling theory has unlimited computational power since a probability distribution
\begin{equation*}
\Pr(\mathsf{a}_1,\dotsc,\mathsf{a}_n\mid \mathsf{x}_1,\dotsc,\mathsf{x}_n)= \begin{cases}
\frac1{2^{n-1}},&\text{if }  \bigoplus_{i\in[n]} \mathsf{a}_i = f(\mathsf{x})\\
0,&\text{otherwise}
\end{cases}
\end{equation*}
for arbitrary Boolean function $f\colon\{0,1\}^n\to\{0,1\}$ satisfies the no-signaling condition.
Hence, in general no-signaling theory, the required number of ``generalized bits'' in non-adaptive measurement-based computation with linear side-processing for arbitrary Boolean function $f$ is at most $n$, and is equal to the linear sketch complexity of $f$.
Note that here, we do not consider the computational complexity for generating the above state since that is a computation independent of input and can be computed before an input is given.
We are able to argue computational complexity after an input is given.
We may regard this setting as the ``non-uniform setting'', and regard the above prepared state as ``non-classical advice''.
These results are summarized in Table~\ref{tbl:mbc}.
Understanding \NMQCp would be important for characterizing quantum physics since general no-signaling theory with linear side-processing allows us to compute arbitrary Boolean function efficiently.
The periodic Fourier sparsity of Boolean function is equal to the required number of qubits for exact computation by \NMQCp.
Hence, in this work, we mainly investigate efficiencies of exact computations by \NMQCp, which is the part corresponding to ``Restricted (NE)'' in Table~\ref{tbl:mbc}.

\begin{table}[t]
\caption{Classes of Boolean functions computable by measurement-based computation with linear side-processing. A, N, E and P stand for ``adaptive'', ``non-adaptive'', ``exact'' and ``probabilistic (bounded-error)'', respectively.}
\label{tbl:mbc}
\begin{center}
\begin{tabular}{|c|c|c|}
\hline
Theory & Computable & Efficiently computable\\
\hline
Local realistic theory & \multicolumn{2}{c|}{Affine (AP)~\cite{PhysRevA.88.022322}}\\
\hline
Quantum theory & Any (NE)~\cite{hoban2011non} & \makecell{\BQP/\textsf{qpoly} (AP)~\cite{PhysRevA.68.022312}\\ Restricted (NE)~\cite{hoban2011non}}\\
\hline
No-signaling theory & \multicolumn{2}{c|}{Any (NE)}\\
\hline
\end{tabular}
\end{center}
\end{table}

\subsection{Our results}
In this work, we first show some techniques for obtaining periodic Fourier representations~\eqref{eq:tlinear} with small periodic Fourier sparsity on the basis of $\mathbb{R}$-multilinear, $\mathbb{F}_2$-multilinear and $\Z{4}$-multilinear polynomials.
More precisely, Boolean functions with small Fourier sparsity or low $\mathbb{F}_2$-degree have a small periodic Fourier sparsity.
Furthermore, Boolean functions related with \Z{4}-polynomial have a small periodic Fourier sparsity as well.
For instance, the complete quadratic function, $\mathsf{CQ}_n(\mathsf{x}) := \bigoplus_{1\le i < j \le n} \mathsf{x}_i\land \mathsf{x}_j = \lfloor\bigl(\sum_{i\in[n]} \mathsf{x}_i \mod 4\bigr)/2\rfloor$ has the periodic Fourier sparsity $n+1$
on the basis of the $\mathbb{Z}/4\mathbb{Z}$-multilinear polynomial representation
while Fourier representation and $\mathbb{F}_2$-polynomial representation give periodic Fourier sparsities $2^n-1$ and $n(n+1)/2$, respectively.
Currently, we do not know any other method for obtaining a periodic Fourier representation.

Next, we show some lower bounds for the periodic Fourier sparsity of Boolean functions.
Hoban et al.\ showed that the periodic Fourier sparsity of $\mathsf{AND}_n$ is $2^n-1$~\cite{hoban2011non}.
However, $\mathsf{AND}_n$ can be approximated by low ${\mathbb{F}_2}$-degree polynomial including random input variables~\cite{smolensky1987algebraic}, and hence, can be computed efficiently with bounded error by \NMQCp using the above technique.
In this work, we show that the periodic Fourier sparsity of a Boolean function $f$ is at least $2^{\deg_{\mathbb{F}_2}(f)}-1$.
Hence, periodic Fourier sparsities of $\mathsf{Mod}^{3}_n$ and $\mathsf{Maj}_n$ are at least $2^{n-1}-1$.
Since these Boolean functions cannot be approximated by low $\mathbb{F}_2$-degree polynomial~\cite{smolensky1987algebraic},
we can expect that these Boolean functions cannot be computed efficiently by \NMQCp even with bounded error.
Let \ENMQCp be a class of Boolean functions which can be exactly computed by \NMQCp with a polynomial number of qubits, i.e., \ENMQCp is a class of Boolean functions with polynomial periodic Fourier sparsity.
\begin{mytheorem}\label{thm:main}
The periodic Fourier sparsity of Boolean function $f$ is at least $2^{\deg_{\mathbb{F}_2}(f)}-1$.
Hence, $\mathsf{Mod}^3_n$ and $\mathsf{Maj}_n$ are not in \ENMQCp.
\end{mytheorem}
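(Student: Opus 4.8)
The plan is to fix one periodic Fourier representation $f(x)=\cos(\pi\,p(x))$ of $f$, where $p(x):=\sum_{S\subseteq[n]}\phi_S\prod_{i\in S}x_i$, and to read the $\mathbb{F}_2$-polynomial of $f$ off its coefficients. Since $\cos(\pi t)\in\{+1,-1\}$ forces $t\in\mathbb Z$, the polynomial $p$ is integer-valued on $\{+1,-1\}^n$ and $f(x)=(-1)^{p(x)}$. Writing $x_i=1-2\mathsf x_i$ and expanding $\prod_{i\in S}(1-2\mathsf x_i)=\sum_{U\subseteq S}(-2)^{|U|}\prod_{i\in U}\mathsf x_i$, I put $p$ in the $\{0,1\}$-monomial basis as $p=\sum_{U\subseteq[n]}\psi_U\prod_{i\in U}\mathsf x_i$ with $\psi_U=(-2)^{|U|}\sum_{S\supseteq U}\phi_S$. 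Because $p$ is integer-valued, each $\psi_U$ is an integer (inclusion--exclusion over the hypercube), and reducing $p$ modulo $2$ shows that $\bigoplus_{U:\,\psi_U\text{ odd}}\prod_{i\in U}\mathsf x_i$ is exactly the $\mathbb{F}_2$-polynomial of $f$; hence $d:=\deg_{\mathbb{F}_2}(f)=\max\{|U|:\psi_U\text{ odd}\}$.

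The second step is a restriction that brings us down to $d$ variables. Choose $T^*$ with $|T^*|=d$ and $\psi_{T^*}$ odd, and let $g\colon\{+1,-1\}^{T^*}\to\{+1,-1\}$ be obtained from $f$ by fixing $x_i=+1$ for all $i\notin T^*$. Then $g(x)=\cos\bigl(\pi\sum_{R\subseteq T^*}\theta_R\prod_{i\in R}x_i\bigr)$ with $\theta_R:=\sum_{S:\,S\cap T^*=R}\phi_S$. Each nonzero $\theta_R$ forces at least one nonzero $\phi_S$ with $S\cap T^*=R$; such witnesses are distinct for distinct $R$ (they are told apart by $S\cap T^*$), and $R\ne\varnothing$ forces $S\ne\varnothing$, so the periodic Fourier sparsity of $f$ is at least $|\{R\ne\varnothing:\theta_R\ne0\}|$. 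Moreover, fixing $x_i=+1$ deletes exactly the $\{0,1\}$-monomials not supported inside $T^*$, so $g$'s coefficient of $\prod_{i\in T^*}\mathsf x_i$ is still the odd integer $\psi_{T^*}$. It therefore suffices to prove that $\theta_R\ne0$ for \emph{every} $R\subseteq T^*$.

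The crux is a $2$-adic valuation argument. Möbius inversion over the subset lattice of $T^*$, applied to $\psi_Q=(-2)^{|Q|}\sum_{Q\subseteq R\subseteq T^*}\theta_R$, yields $\theta_R=(-1)^{|R|}\sum_{R\subseteq Q\subseteq T^*}2^{-|Q|}\psi_Q$, where the $\psi_Q$ with $Q\subseteq T^*$ are integers and $\psi_{T^*}$ is odd. The summand $Q=T^*$ has $2$-adic valuation exactly $-d$, while every other summand has $|Q|\le d-1$ and hence valuation $\ge-(d-1)>-d$; by the ultrametric property the $Q=T^*$ term strictly dominates, so $\theta_R$ has $2$-adic valuation $-d$ and in particular $\theta_R\ne0$. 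This produces $2^d-1$ distinct nonempty sets $S$ with $\phi_S\ne0$, i.e.\ the periodic Fourier sparsity of $f$ is at least $2^{\deg_{\mathbb{F}_2}(f)}-1$ (in particular, for $\mathsf{AND}_n$ this recovers the tight bound $2^n-1$ of Hoban et al.). Finally, $\mathsf{Mod}^3_n$ and $\mathsf{Maj}_n$ have $\mathbb{F}_2$-degree linear in $n$, by the standard computation of the $\mathbb{F}_2$-polynomial of a symmetric function, so their periodic Fourier sparsities are super-polynomial and neither lies in \ENMQCp.

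I expect the main obstacle to be making the two basis conversions and the restriction watertight: verifying that the odd-$\psi_U$ monomials really constitute the $\mathbb{F}_2$-polynomial of $f$ (so that $\deg_{\mathbb{F}_2}(f)$ is pinned down by a single coefficient being odd), and that passing to $g$ neither inflates the periodic Fourier sparsity nor destroys the odd top-degree coefficient. Once these identifications are in place, the operative point — that one odd top-degree coefficient $\psi_{T^*}$ forces every $\theta_R$ to have $2$-adic valuation $-d$, hence to be nonzero — is short.
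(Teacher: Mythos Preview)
Your proof is correct and takes a genuinely different route from the paper. The paper factors the bound through an intermediate ``granularity'' step: it first proves (Lemma~\ref{lem:gran}, adapting Gopalan et al.) by induction on $n$ that any periodic Fourier representation of sparsity $s$ has all coefficients in $\mathbb{R}_{\lfloor\log(s+1)\rfloor}$, and separately (Lemma~\ref{lem:bdigits}) that coefficients in $\mathbb{R}_k$ force $\deg_{\mathbb{F}_2}(f)\le k$; chaining these gives $s\ge 2^{\deg_{\mathbb{F}_2}(f)}-1$. You instead restrict to the $d$ variables of a top-degree $\mathbb{F}_2$-monomial and argue directly, via the $2$-adic valuation of the M\"obius-inverted coefficients, that every one of the $2^d$ restricted coefficients $\theta_R$ is nonzero (indeed has valuation exactly $-d$), then pull back one nonzero $\phi_S$ per nonempty $R$. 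Your argument is more direct for this particular theorem and avoids the induction; the paper's modular route has the advantage that its two lemmas are reusable elsewhere (Lemma~\ref{lem:bdigits} also shows the number-of-binary-digits bound in Construction~\ref{const:f2} is tight, and the granularity lemma is of independent interest). Note, incidentally, that your valuation statement $v_2(\theta_R)=-d$ is essentially the restricted-coefficient case of the paper's granularity bound, so the two arguments are cousins even though the packaging differs. One minor remark: ``linear in $n$'' for $\deg_{\mathbb{F}_2}(\mathsf{Maj}_n)$ should be read as $\Theta(n)$ (it equals $2^{\lfloor\log_2 n\rfloor}$ for odd $n$, not $n$ itself), but this is of course enough for the super-polynomial conclusion.
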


Let \QNCfz be a class of Boolean functions which can be exactly computed by polynomial-size constant-depth quantum circuit with fan-out gates~\cite{green2002counting}, \cite{v001a005}, \cite{Takahashi2016}.
Obviously, $\ENMQCp\subseteq\QNCfz$ as shown in Appendix~\ref{apx:circuit}.
Since $\mathsf{AND}_n, \mathsf{Mod}^3_n, \mathsf{Maj}_n\in\QNCfz$~\cite{Takahashi2016}, we obtain $\ENMQCp\subsetneq\QNCfz$.
It is also easy to see $\ENMQCp\subsetneq\TCz$ where \TCz is a class of Boolean functions which can be computed by polynomial-size constant-depth circuit with $\land,\lnot,\mathsf{Maj}_n$ gates as shown in Appendix~\ref{apx:circuit}.
On the other hand, \QNCfz circuits which exactly compute $\mathsf{AND}_n$, $\mathsf{Mod}^3_n$ and $\mathsf{Maj}_n$ can be directly transformed into ``depth-2'' \NMQCp using a polynomial number of qubits in which outputs of \NMQCp in the first layer are used as inputs of \NMQCp in the second layer
where quantum states used in the first layer and the second layer are not entangled~\cite{v001a005}, \cite{Takahashi2016}.

\begin{mytheorem}\label{thm:nmqc2}
Any symmetric Boolean function, e.g., $\mathsf{AND}_n$, $\mathsf{Mod}^3_n$, $\mathsf{Maj}_n$, etc., can be computed exactly by depth-2 \NMQCp using a polynomial number of qubits.
\end{mytheorem}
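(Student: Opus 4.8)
The plan is to reduce to the known $\QNCfz$ algorithm for symmetric functions and then cut that constant-depth computation into two \NMQCp layers. Write a symmetric $f\colon\{0,1\}^n\to\{0,1\}$ as $f(\mathsf{x})=F(w)$ with Hamming weight $w:=\sum_{i=1}^n\mathsf{x}_i\in\{0,\dots,n\}$, and set $m:=\lceil\log_2(n+1)\rceil$, so that the binary digits $w_0,\dots,w_{m-1}$ of $w$ determine $f(\mathsf{x})$. The first layer will compute $w$ in binary, and the second layer will evaluate $F$ on those $m$ bits.

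The second layer is immediate: $F$, regarded as a Boolean function of the $m=O(\log n)$ bits $w_0,\dots,w_{m-1}$, has Fourier sparsity at most $2^m$, so by the conversion from the Fourier representation its periodic Fourier sparsity is at most $2^m-1=O(n)$; hence an exact \NMQCp with $O(n)$ qubits computes $F$ from its bit-string, and that is the entire second layer.

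The first layer is the heart of the matter, and it must not be built naively. If the $j$-th first-layer \NMQCp instance were asked to output the bit $w_j$, we would be stuck: by Lucas's theorem $w_j$ is, as a Boolean function, the elementary symmetric polynomial $e_{2^j}(\mathsf{x})\bmod 2$, so $\deg_{\mathbb{F}_2}(w_j)=2^j$, and Theorem~\ref{thm:main} then forces its periodic Fourier sparsity to be at least $2^{2^j}-1$ — superpolynomial already around $j\approx\log\log n$ and exponential around $j\approx m$. So the high-order bits of $w$ are provably not computable by any single non-adaptive \NMQCp with $\mathrm{poly}(n)$ qubits. Instead, realize the Hamming-weight step by the standard constant-depth fan-out counting gadget: on an $m$-qubit register initialized to $H^{\otimes m}\ket{0}$, apply the phase $\ket{y}\mapsto e^{2\pi\mathrm i\, wy/2^m}\ket{y}$ as a product of single-qubit phase gates each controlled by a single $\mathsf{x}_i$, then apply the inverse QFT modulo $2^m$, then read out. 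This circuit has depth $O(1)$ with fan-out, and we split it across the two layers in the Raussendorf style — the fan-out gates (GHZ-state measurements with the linear byproduct bookkeeping) and the single-qubit phases into layer one, the QFT-mod-$2^m$ and the byproduct corrections into layer two, with the linear side-processor forwarding exactly the outcome parities that determine the corrections. Each controlled phase is single-qubit and controlled by one input bit, so each layer-one \NMQCp instance is tiny; the counting register and its QFT cost $\mathrm{poly}(m)=\mathrm{polylog}(n)$ qubits; fanning out the $n$ inputs and the $F$-lookup cost $O(n)$; so the total is polynomial.

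The main obstacle is exactly the tension made explicit above: since Theorem~\ref{thm:main} rules out computing $w$ with a single non-adaptive layer, the construction must genuinely rely on the linear feedback from the first layer into the second — it is the two-round (linearly adaptive) structure, not bare \NMQCp power, that makes the counting gadget implementable. The work is then to verify that (i) no individual \NMQCp instance in either layer is asked to compute a high-$\mathbb{F}_2$-degree function, (ii) every piece of data passed between the layers is $\mathbb{F}_2$-linear in the measurement outcomes, as the $\oplus$-model demands, and (iii) the qubit counts of the fan-out gadgets, of the QFT-mod-$2^m$, and of the $F$-lookup all remain polynomial. The conceptual part — reduce to $\QNCfz$, and note that $F$ on $O(\log n)$ bits is automatically a cheap second layer — is short; the careful two-layer decomposition of the counting gadget is where the effort goes.
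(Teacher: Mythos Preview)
Your proposal has a genuine gap at the central step. You correctly observe that the bits $w_j$ of the Hamming weight cannot be computed by any single-layer \NMQCp with polynomially many qubits (since $\deg_{\mathbb{F}_2}(w_j)=2^j$ and Theorem~\ref{thm:main} applies). But your fix --- putting the fan-out and controlled phases into layer one and the inverse QFT into layer two ``in the Raussendorf style'' --- does not fit the model. By the paper's definition, in depth-2 \NMQCp the quantum states of the two layers are in tensor product: the only data crossing from layer~1 to layer~2 is classical measurement outcomes, passed through the $\mathbb{F}_2$-linear side-processor. Your layer~2 is supposed to apply $\mathrm{QFT}^{-1}$ to the phased superposition $\tfrac{1}{\sqrt{2^m}}\sum_y e^{2\pi i\, wy/2^m}\ket{y}$ produced in layer~1, but that state cannot be handed over --- there is no entanglement across layers to teleport it through, and measuring it in layer~1 in any local basis destroys exactly the phase information the QFT was meant to extract. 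A cluster-state realization of the whole circuit necessarily has its resource state entangled across the cut; and if instead you push the entire circuit (QFT included) into layer~1 alone, you are back to asking a single \NMQCp to output $w_j$, which you already ruled out.

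The paper's construction is different and never computes the bits of $w$. Layer~1 consists of $\lfloor\log n\rfloor+1$ independent $n$-qubit \NMQCp instances: the $k$-th outputs a bit $Z_k\in\{\pm1\}$ with expectation $\cos\bigl(\tfrac{\pi}{2^k}\sum_i\tfrac{1-x_i}{2}\bigr)$, coming from the obvious linear periodic Fourier expression of sparsity $n$. These $Z_k$ are in general random, but if $w=0$ every $Z_k=+1$ with certainty, while if $w\ne 0$ then writing $w=2^{k^*}\cdot(\text{odd})$ forces $Z_{k^*}=-1$ with certainty. Hence $\mathsf{OR}(Z_0,\dots,Z_{\lfloor\log n\rfloor})$ is deterministic and equals $\mathsf{OR}_n(x)$; layer~2 computes that OR on $O(\log n)$ bits with $O(n)$ qubits. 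Adding a constant phase shift gives $\mathsf{Exact}^k_n$ for each $k$, and XORing the relevant $\mathsf{Exact}^k_n$'s yields an arbitrary symmetric function. The idea you are missing is that first-layer outputs need not be deterministic functions of $x$; they may be random bits, provided the second-layer function applied to them is deterministic.
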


Theorem~\ref{thm:nmqc2} shows a significant gap between \NMQCp, which requires exponentially many qubits for $\mathsf{AND}_n$, $\mathsf{Mod}^3_n$ and $\mathsf{Maj}_n$, and depth-2 \NMQCp, which only needs polynomially many qubits for these Boolean functions.
Note that polynomial-depth \NMQCp is not equivalent to general adaptive MBQC since in polynomial-depth \NMQCp, if a measurement outcome of a qubit $q_0$ is used for a measurement choice for a qubit $q_1$, then $q_0$ and $q_1$ must be originally separable.
Theorem~\ref{thm:nmqc2} also implies that constant-depth \NMQCp using a polynomial number of qubits can compute any Boolean functions in \TCz.

Furthermore, since \ACpz, a class of Boolean functions computed by polynomial-size constant-depth circuit using $\land,\oplus,\lnot$ gates, cannot compute the majority function~\cite{smolensky1987algebraic}, \cite{razborov1987lower},
we obtain the following theorem on a weak sampling of \NMQCp.
\begin{mytheorem}\label{thm:acpz}
For any \NMQCp with a polynomial number of qubits which does not necessarily compute some Boolean function exactly, there is generally no \ACpz circuit whose output is in a support of output distribution of the given \NMQCp for any input.
\end{mytheorem}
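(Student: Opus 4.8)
Here is the proof plan I would follow.

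\medskip

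The plan is to produce a single \NMQCp algorithm $M$ with a polynomial number of qubits such that no polynomial‑size \ACpz circuit is consistent with the output of $M$ on every input in the weak (support) sense, by first reducing the weak‑consistency condition to a partial‑function computation problem and then invoking the Razborov--Smolensky lower bound for $\mathsf{Maj}_n$ (equivalently $\mathsf{Mod}^3_n$).

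First I would make the support condition explicit. By Werner and Wolf's theorem we may take $M$ to use a generalized GHZ state on $k=\mathrm{poly}(n)$ qubits with local $XY$‑plane measurements at angles $\pi(\phi_i z_i(\mathsf{x})+\phi_0/k)$, where $z_1,\dots,z_k$ are the fixed $\mathbb{F}_2$‑linear forms computed by the side‑processor. Writing $\Theta(\mathsf{x}) := \phi_0+\sum_{i=1}^k\phi_i z_i(\mathsf{x})$, a direct computation of the GHZ outcome statistics shows that the single output bit $b$ of $M$ is deterministic, and equal to $g(\mathsf{x}) := \Theta(\mathsf{x})\bmod 2$, exactly when $\Theta(\mathsf{x})\in\mathbb{Z}$, and is a fair coin flip (so the support on that input is all of $\{0,1\}$) otherwise. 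Consequently, an \ACpz circuit $C$ whose output lies in the support of the output distribution of $M$ for every input is precisely a total \ACpz function extending the partial Boolean function $g|_A$, where $A := \{\mathsf{x} : \Theta(\mathsf{x})\in\mathbb{Z}\}$ is the ``deterministic set''. So it suffices to construct a polynomial‑qubit $M$ for which $g|_A$ admits no total \ACpz extension. (This reduction is the solid core; the closure properties used are routine.)

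Second I would build such an $M$. The freedom to let $M$ be a genuine sampler, rather than computing any Boolean function exactly, is what makes this possible: the periodic Fourier coefficients $\phi_S$ need not be dyadic, so $\Theta$ need only be integral on $A$, not everywhere. By Theorem~\ref{thm:main}, $\mathsf{Maj}_n$ and $\mathsf{Mod}^3_n$ lie far outside \ENMQCp, and it is believed they are not in \BNMQCp either; hence $A$ must necessarily be a proper — indeed, not too dense — subset of $\{0,1\}^n$. The construction I would aim for starts from a low‑$\mathbb{F}_2$‑degree / \ZZZZ‑polynomial object that agrees with $\mathsf{Maj}_n$ on a large, suitably spread (pseudorandom) set of inputs, turns it into a periodic Fourier representation of polynomial sparsity by the techniques of the previous sections, and observes that on the subset of inputs where this representation is integer‑valued it reproduces $\mathsf{Maj}_n$; this subset is taken to be $A$. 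Then any total extension $\widetilde g$ of $g|_A = \mathsf{Maj}_n|_A$ agrees with $\mathsf{Maj}_n$ on all of $A$, and since $A$ is spread and of density $2^{-n^{o(1)}}$, the Razborov--Smolensky polynomial method (in an average‑case/pseudorandom‑set form) shows that no polynomial‑size \ACpz circuit can agree with $\mathsf{Maj}_n$ everywhere on $A$; hence no such $C$ exists.

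The hard part is the second step, and it is a genuine tension rather than a calculation: exact or bounded‑error \NMQCp computation of $\mathsf{Maj}_n$ requires exponentially many qubits, which forces the deterministic set $A$ to omit a non‑negligible fraction of inputs, yet $A$ must still be large and pseudorandom enough that $g|_A = \mathsf{Maj}_n|_A$ has no total \ACpz extension — which in turn requires a suitably robust (pseudorandom‑set) version of the Razborov--Smolensky bound. Producing a polynomially sparse periodic Fourier ``pattern'' that locks onto $\mathsf{Maj}_n$ on such a set, and identifying exactly which spreadness of $A$ is deliverable by a polynomial number of qubits while still defeating \ACpz, is the crux of the argument; the GHZ support computation, the reduction to partial‑function extension, and the final invocation of the circuit lower bound are routine by comparison.
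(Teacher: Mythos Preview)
Your reduction of the support condition to extending a partial Boolean function on the ``deterministic set'' of the sampler is sound, but from there you head down an unnecessarily hard path. Constructing a single-output polynomial-qubit \NMQCp that agrees with $\mathsf{Maj}_n$ on a set dense and pseudorandom enough to defeat \ACpz via an average-case Razborov--Smolensky bound is, as you yourself acknowledge, the crux---and you do not actually carry it out. Nothing in the paper's toolbox produces such a periodic-Fourier pattern, and Theorem~\ref{thm:main} in fact works against you here rather than for you.

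The paper's proof is one line and uses none of this. Take $M$ to be the depth-2 \NMQCp algorithm for $\mathsf{Maj}_n$ built in Theorem~\ref{thm:nmqc2}: it uses polynomially many qubits and its output on every input $x$ is \emph{deterministically} $\mathsf{Maj}_n(x)$. Any circuit whose output lies in the support of $M(x)$ for all $x$ therefore computes $\mathsf{Maj}_n$ exactly, which \ACpz cannot do by Razborov--Smolensky. If you prefer a strictly non-adaptive single-layer $M$, take only the first layer, whose output is the tuple $(Z_0,\dots,Z_{\lfloor\log n\rfloor})$; any \ACpz weak simulator of this, composed with the second-layer function---a fixed Boolean function on $O(\log n)$ bits, hence trivially in $\mathsf{AC}^0$---again computes $\mathsf{Maj}_n$. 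The phrase ``does not necessarily compute some Boolean function exactly'' in the statement merely signals that the theorem concerns samplers in general; the witnessing $M$ here happens to be deterministic. The idea you are missing is simply to invoke Theorem~\ref{thm:nmqc2}.
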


It has been conjectured that an \ACCz circuit, which is an \ACpz circuit with $\mathsf{Mod}^k_n$ gates for arbitrary fixed integer $k$, cannot compute $\mathsf{Maj}_n$.
If this conjecture is true, Theorem~\ref{thm:acpz} holds also for \ACCz in place of \ACpz.

\subsection{Organization}
Notion and notations used in this paper are introduced in Section~\ref{sec:pre}.
Methods for deriving periodic Fourier representations are shown in Section~\ref{sec:upper}.
In Section~\ref{sec:lower}, we show methods for deriving lower bounds of periodic Fourier sparsity, and show exponential lower bounds for $\mathsf{Mod}^3_n$ and $\mathsf{Maj}_n$.
In Section~\ref{sec:nmqc2}, we show depth-2 \NMQCp algorithms using a polynomial number of qubits computing $\mathsf{AND}_n$, $\mathsf{Mod}^3_n$ and $\mathsf{Maj}_n$.
In Section~\ref{sec:partial}, well-known multipartite Bell inequalities are understood as \NMQCp for partial functions.
Some algebraic techniques useful for multipartite XOR game are shown as well.
In Section~\ref{sec:error}, we generalize the periodic Fourier sparsity for \NMQCp with bounded error, and show a relationship between the number of binary digits of coefficients $(\phi_S)_{S\subseteq[n]}$ and $\mathbb{F}_2$-degree.

\section{Preliminaries}\label{sec:pre}
\subsection{Fourier representation}
Any function $f\colon\{+1,-1\}^n\to\mathbb{R}$ can be uniquely represented by
a $\mathbb{R}$-multilinear polynomial
\begin{equation*}
f(x) = \sum_{S \subseteq[n]} \widehat{f}(S)\prod_{i\in S} x_i
\end{equation*}
where $[n]:=\{1,2,\dotsc,n\}$.
Here, $\widehat{f}(S)$, which is called the Fourier coefficient, satisfies
\begin{equation*}
\widehat{f}(S) = \mathbb{E}\left[f(x)\prod_{i\in S} x_i\right]
:= \frac1{2^n}\sum_{x\in\{+1,-1\}^n} f(x)\prod_{i\in S} x_i.
\end{equation*}
The Fourier sparsity is defined to be the number of non-zero Fourier coefficients.
An $\mathbb{R}$-degree of $f$ is defined by $\mathrm{deg}_{\mathbb{R}}(f) := \max \{|S|\mid S\subseteq[n], \widehat{f}(S)\ne 0\}$.
Let $\mathbb{R}_\ell := \{x \in \mathbb{R}\mid 2^\ell x \in \mathbb{Z}\}$.
Then, for Boolean function $f\colon\{+1,-1\}^n\to\{+1,-1\}$, $\widehat{f}(S) \in \mathbb{R}_{\mathrm{deg}_{\mathbb{R}}(f) -1}$~\cite{odonnell2014analysis}.
The Fourier dimension $\mathrm{dim}(\widehat{f})$ of $f$ is defined by the dimension of linear space on $\mathbb{F}_2$ spanned by $\{1_S \mid S\subseteq[n], \widehat{f}(S) \ne 0\}$
where $1_S\in\mathbb{F}_2^n$ is the vector whose $i$-th element is 1 if and only if $i \in S$.
In this paper, for a binary variable $x\in\{+1,-1\}$, $\mathsf{x}$ denotes the corresponding variable in $\{0,1\}$, i.e., $\mathsf{x} = (1-x)/2$.
Similarly, we sometimes regard a Boolean function as $\{0,1\}^n\to\{0,1\}$ rather than $\{+1,-1\}^n\to\{+1,-1\}$. This is not necessarily explicitly stated.
In this paper, when we consider a Fourier coefficient $\widehat{f}(S)$ of a Boolean function $f$, $f$ is always regarded as $\{+1,-1\}^n\to\{+1,-1\}$.

\subsection{$\mathbb{F}_2$-polynomial representation}
Any Boolean function $f\colon\{0,1\}^n\to\{0,1\}$ can be uniquely represented by
an $\mathbb{F}_2$-multilinear polynomial
\begin{equation*}
f(\mathsf{x}) = \bigoplus_{S\subseteq[n]} c_S \prod_{i\in S} \mathsf{x}_i.
\end{equation*}
Here, a coefficient $c_S$ satisfies
\begin{equation}
c_S = \bigoplus_{\mathsf{x},\, \mathrm{supp}(\mathsf{x})\subseteq S} f(\mathsf{x})
\label{eq:f2c}
\end{equation}
where $\mathrm{supp}(\mathsf{x}):=\{i\in[n]\mid \mathsf{x}_i=1\}$.
An $\mathbb{F}_2$-degree of $f$ is defined by $\deg_{\mathbb{F}_2}(f) := \max\{|S|\mid S\subseteq [n], c_S \ne 0\}$.

\subsection{Periodic Fourier representation}
Any Boolean function $f\colon\{+1,-1\}^n\to\{+1,-1\}$ can be (not uniquely) represented by
\begin{equation*}
f(x) = \cos\left(\pi\sum_{S\subseteq[n]}\phi_S\prod_{i\in S}x_i\right).
\end{equation*}
Here, the number of non-zero coefficients corresponding to non-empty subset $|\{S\subseteq[n]\mid S\ne \varnothing, \phi_S\ne 0\}|$
is called the periodic Fourier sparsity of the representation.
The periodic Fourier sparsity $\pfs(f)$ of $f$ is defined by the minimum of periodic Fourier sparsities of all periodic Fourier representations of $f$.
For a coefficient $\phi_S$, the unique $k$ such that $\phi_S\in\mathbb{R}_k\setminus\mathbb{R}_{k-1}$ is called the number of binary digits of $\phi_S$.
A maximum of the numbers of binary digits of $\phi_S$ for all $S\subseteq[n]$ is called the number of binary digits of a periodic Fourier representation.
Without loss of generality, we can assume that $\phi_S\notin\mathbb{R}_0$ for $S\ne\varnothing$.
Since $\sin(\pi x) = \cos(\pi (x-\frac12))$, the periodic Fourier sparsity and the number of binary digits of non-constant Boolean function are invariant even if a periodic Fourier representation~\eqref{eq:tlinear} uses the sine function in place of the cosine function.
Hence, we will sometimes use~\eqref{eq:tlinear} with the sine function.

\subsection{Specific Boolean functions}
In this section, we will introduce specific Boolean functions which appear in this paper.
The subscript ${}_n$ stands for the number of input variables.
In the following explanations, we assume that Boolean functions are $\{0,1\}^n\to\{0,1\}$.
\begin{itemize}
\item $\mathsf{XOR}_n\colon$ The XOR function.
\item $\mathsf{AND}_n\colon$ The AND function.
\item $\mathsf{OR}_n\colon$ The OR function.
\item $\mathsf{Maj}_n\colon$ The majority function. The number $n$ of input variables is assumed to be odd.
\item $\mathsf{CQ}_n\colon$ The complete quadratic function, i.e., $\mathsf{CQ}_n(\mathsf{x}):=\bigoplus_{1\le i < j \le n} \mathsf{x}_i\land\mathsf{x}_j$.
\item $\mathsf{C}^3_n\colon$ The complete cubic function, i.e., $\mathsf{C}^3_n(\mathsf{x}):=\bigoplus_{1\le i < j < k \le n} \mathsf{x}_i\land\mathsf{x}_j\land\mathsf{x}_k$.
\item $\mathsf{Mod}^k_n\colon$ The $k$-modular counting function, i.e., $\mathsf{Mod}^k_n(\mathsf{x})=1$ if and only if $\sum_{i\in[n]}\mathsf{x}_i$ is divisible by $k$.
\item $\mathsf{Exact}^k_n\colon$ The $k$-exactness function, i.e., $\mathsf{Exact}^k_n(\mathsf{x})=1$ if and only if $\sum_{i\in[n]}\mathsf{x}_i = k$.
\end{itemize}
Furthermore, we define $\mathsf{LSB}^\ell\colon\mathbb{Z}_{\ge 0}\to\{0,1\}$ which is
the $k$-th lowest significant bit (LSB) function, i.e., $\mathsf{LSB}^\ell(m)=1$ if and only if $k$-th LSB of binary representation of $m$ is 1.
Note that $\mathsf{XOR}_n(\mathsf{x}) = \mathsf{LSB}^1\bigl(\sum_{i\in[n]}\mathsf{x_i}\bigr)$ and $\mathsf{CQ}_n(\mathsf{x}) = \mathsf{LSB}^2\bigl(\sum_{i\in[n]}\mathsf{x_i}\bigr)$.

\section{Periodic Fourier sparsity: Upper bounds}\label{sec:upper}
\subsection{$\mathbb{R}$-polynomial and $\mathbb{F}_2$-polynomial}
In this section, we consider how to obtain periodic Fourier representations~\eqref{eq:tlinear} from the Fourier representation and the $\mathbb{F}_2$-polynomial representations.
In contrast to the Fourier representation and the $\mathbb{F}_2$-polynomial representation of Boolean functions, a periodic Fourier representation is not unique.
For instance, $\mathsf{XOR}_2$ and $\mathsf{Maj}_3$ can be represented in the following two ways
\begin{align*}
\mathsf{XOR}_2(x_1,x_2) &= \sin\left(\frac{\pi}2x_1x_2\right)\\
&= \sin\left(\frac{\pi}2\left(-1 + x_1 + x_2\right)\right).\\
\mathsf{Maj}_3(x_1,x_2,x_3) &= \sin\left(\frac{\pi}4\left(x_1+x_2+x_3-x_1x_2x_3\right)\right)\\
&= \sin\left(\frac{\pi}4\left(-1 + 2x_1 + 2x_2 + 2x_3 - x_1x_2 - x_2x_3 - x_3x_1\right)\right).
\end{align*}
While the first representations merely use the Fourier representations,
the second representations use the periodicity of the sine function.
In the following, we show how to generalize these representations to general Boolean functions.
From $f(x) = \sin\left(\frac{\pi}2 f(x)\right) = \sin\left(\frac{\pi}2 \sum_{S\subseteq[n]}\widehat{f}(S) \prod_{i\in S}x_i\right)$, we immediately obtain the following construction.

\begin{construction}\label{const:r}
A Boolean function $f\colon \{+1,-1\}^n \to \{+1,-1\}$ has a periodic Fourier representation
\begin{equation*}
f(x) = \sin\left(\frac{\pi}2 \sum_{S\subseteq[n]}\widehat{f}(S) \prod_{i\in S}x_i\right)
\end{equation*}
with the periodic Fourier sparsity $|\{S\subseteq[n]\mid S\ne\varnothing, \widehat{f}(S)\ne 0\}|$ and the number of binary digits at most $\deg_{\mathbb{R}}(f)$.
\end{construction}

There are Boolean functions whose Fourier sparsity is full but whose $\mathbb{F}_2$-degree is small, e.g., the inner product function, the complete quadratic function of even size, etc~\cite{rothaus1976bent}, \cite{odonnell2014analysis}.
The following construction shows that a Boolean function with low $\mathbb{F}_2$-degree has small periodic Fourier sparsity.

\begin{construction}\label{const:f2}
Assume that Boolean function $f\colon \{0,1\}^n \to \{0,1\}$ has the $\mathbb{F}_2$-polynomial representation
$f(\mathsf{x})= \bigoplus_{S\subseteq[n]} c_S \prod_{i\in S} \mathsf{x}_i$ for $(c_S \in \{0,1\})_{S\subseteq[n]}$.
Then, $f$ has a periodic Fourier representation~\eqref{eq:tlinear}
where
\begin{equation}
\phi_S = (-1)^{|S|}\sum_{T\supseteq S}\frac1{2^{|T|}}c_T
\end{equation}
with the periodic Fourier sparsity $|\{S\subseteq[n]\mid S\ne\varnothing, \exists T\supseteq S, c_T\ne 0\}|\le n^{\mathrm{deg}_{\mathbb{F}_2}(f)}$ and the number of binary digits $\deg_{\mathbb{F}_2}(f)$.
\end{construction}
\begin{proof}
By replacing $\mathsf{x}_i$ with $(1-x_i)/2$, we obtain the real-polynomial representation using modulo 2
\begin{equation*}
f(x) \equiv \sum_{S\subseteq[n]} c_S \prod_{i\in S} \frac{1-x_i}2 \mod 2
\end{equation*}
where $f$ at the left-hand side is a function from $\{+1,-1\}^n$ to $\{0,1\}$.
Hence,
\begin{align*}
f(x) &= \cos\left(\pi\sum_{S\subseteq[n]} c_S \prod_{i\in S} \frac{1-x_i}2\right)\\
&= \cos\left(\pi\sum_{S\subseteq[n]} c_S \frac1{2^{|S|}}\sum_{T\subseteq S}\prod_{i\in T} (-x_i)\right)\\
&= \cos\left(\pi\sum_{T\subseteq[n]} (-1)^{|T|} \left(\sum_{S\supseteq T}  \frac1{2^{|S|}}c_S\right) \prod_{i\in T} x_i\right)
\end{align*}
where $f$ at the left-hand side is a function from $\{+1,-1\}^n$ to $\{+1,-1\}$.
\end{proof}
Here, the number of binary digits obtained by Construction~\ref{const:f2} is optimal.
\begin{mylemma}\label{lem:bdigits}
If $f$ has a periodic Fourier representation~\eqref{eq:tlinear} with $\phi_S\in\mathbb{R}_k$ for all $S\subseteq[n]$, then $\mathrm{deg}_{\mathbb{F}_2}(f)\le k$.
\end{mylemma}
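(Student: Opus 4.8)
The plan is to extract the $\mathbb{F}_2$-polynomial of $f$ directly from a periodic Fourier representation by reducing modulo $2$. Suppose $f(x) = \cos\bigl(\pi\sum_{S\subseteq[n]}\phi_S\prod_{i\in S}x_i\bigr)$ with every $\phi_S\in\mathbb{R}_k$, i.e.\ $2^k\phi_S\in\mathbb{Z}$. Since $f(x)\in\{+1,-1\}$, the argument $\sum_{S}\phi_S\prod_{i\in S}x_i$ must be an integer for every $x\in\{+1,-1\}^n$, and $f(x)=1$ exactly when that integer is even. Writing $\mathsf{x}_i=(1-x_i)/2$, I would expand $\prod_{i\in S}x_i = \prod_{i\in S}(1-2\mathsf{x}_i)$ and collect terms to view $g(\mathsf{x}) := \sum_S\phi_S\prod_{i\in S}x_i$ as an integer-valued polynomial in $\mathsf{x}\in\{0,1\}^n$. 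Then the Boolean function $f$ (valued in $\{0,1\}$) equals $g(\mathsf{x})\bmod 2$, so its $\mathbb{F}_2$-polynomial is obtained by reducing the coefficients of $g$ modulo $2$; hence $\deg_{\mathbb{F}_2}(f)$ is at most the degree of $g$ as a polynomial over $\mathbb{Z}$ (more precisely, over $\mathbb{Z}_{(2)}$ after clearing the power of $2$ in the denominators).

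The key point is to control the denominators. After substituting $x_i=1-2\mathsf{x}_i$, each coefficient $\phi_S$ contributes terms of the form $\phi_S\cdot(\pm 2)^{|T|}\prod_{i\in T}\mathsf{x}_i$ for $T\subseteq S$. Because $\mathsf{x}_i^2=\mathsf{x}_i$ over $\{0,1\}$, after full multilinear reduction the coefficient of $\prod_{i\in T}\mathsf{x}_i$ in $g$ is $\sum_{S\supseteq T}\phi_S\,a_{S,T}$ for suitable integers $a_{S,T}$ each divisible by $2^{|T|}$ (this is exactly the computation already carried out in the proof of Construction~\ref{const:f2}, run in reverse). Since $\phi_S\in\mathbb{R}_k$, each such coefficient lies in $\mathbb{R}_k$ as well, and moreover is divisible by $2^{|T|-k}$; in particular for $|T|>k$ the coefficient is an even integer and therefore vanishes modulo $2$. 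This forces every monomial of degree $>k$ in the $\mathbb{F}_2$-polynomial of $f$ to have coefficient $0$, which is precisely $\deg_{\mathbb{F}_2}(f)\le k$.

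I expect the main obstacle to be bookkeeping rather than anything conceptual: one must argue cleanly that $g$ is genuinely $\mathbb{Z}$-valued on $\{0,1\}^n$ (equivalently $\{+1,-1\}^n$) and that the multilinearization does not secretly reintroduce small denominators. The cleanest route is probably: (i) observe $g(\mathsf{x})\equiv f(\mathsf{x})\pmod 2$ as integers for every Boolean input, so $h(\mathsf{x}) := g(\mathsf{x})-f_{\mathbb{Z}}(\mathsf{x})$ is an even integer everywhere, where $f_{\mathbb{Z}}$ is any fixed integer lift of the $\mathbb{F}_2$-polynomial of $f$; (ii) apply the standard formula $c_T=\bigoplus_{\mathrm{supp}(\mathsf{x})\subseteq T}f(\mathsf{x})$ from \eqref{eq:f2c}, together with the analogous Möbius-type inversion over $\mathbb{Z}$, to read off that the $\mathbb{F}_2$-coefficient $c_T$ is $\sum_{\mathrm{supp}(\mathsf{x})\subseteq T}g(\mathsf{x})\bmod 2$; (iii) bound the $2$-adic valuation of the coefficient of $\prod_{i\in T}\mathsf{x}_i$ in $g$, using $\phi_S\in\mathbb{R}_k$ and the factor $2^{|T|}$ coming from the substitution, to conclude it is $\equiv 0\pmod 2$ once $|T|>k$. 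Alternatively, and perhaps more transparently, I could invoke Construction~\ref{const:f2} in the contrapositive: if $\deg_{\mathbb{F}_2}(f)=d$, then the periodic Fourier representation produced there has a coefficient $\phi_S$ with $|S|=d$ equal to $(-1)^d\sum_{T\supseteq S}2^{-|T|}c_T$, whose leading $2^{-d}c_S$ term with $c_S=1$ is not killed by the remaining terms, so $\phi_S\notin\mathbb{R}_{d-1}$ — but this only pins down \emph{that particular} representation, so to get the statement for \emph{all} representations the modular-reduction argument above is the one I would write out in full.
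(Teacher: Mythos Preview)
Your proposal is correct and your ``cleanest route'' (i)--(iii) is exactly the paper's proof: write $\sum_{S}\phi_S\prod_{i\in S}x_i = 2t(x)+f(\mathsf{x})$ for an integer-valued $t$, apply~\eqref{eq:f2c} to get $c_S\equiv\sum_{\mathrm{supp}(\mathsf{x})\subseteq S}g(\mathsf{x})\pmod 2$, and evaluate that sum to $2^{|S|}\sum_{T\subseteq\bar S}\phi_T$, which is even once $|S|>k$. Your step~(iii) phrases the last computation in terms of the multilinear coefficient of $g$ rather than the M\"obius sum of its values, but the decisive factor of $2^{|T|}$ is the same, and you were right to discard the contrapositive-via-Construction~\ref{const:f2} idea as insufficient for an arbitrary representation.
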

\begin{proof}
For $f\colon\{0,1\}^n\to\{0,1\}$,
there is an integer-valued function $t\colon\{+1,-1\}^n\to\mathbb{Z}$, such that
\begin{equation*}
\sum_{S\subseteq[n]} \phi_S \prod_{i\in S}x_i
= 2 t(x) + f(\mathsf{x}).
\end{equation*}
Hence, from~\eqref{eq:f2c},
\begin{align*}
c_S &\equiv
\sum_{\mathsf{x},\, \mathrm{supp}(\mathsf{x})\subseteq S} \sum_{T\subseteq[n]} \phi_T \prod_{i\in T}(1-2\mathsf{x}_i) \mod 2\\
&\equiv
2^{|S|}\sum_{T\subseteq \bar{S}} \phi_T \mod 2.
\end{align*}
Hence, for $|S|>k$, $c_S=0$.
\end{proof}

From Constructions~\ref{const:r} and \ref{const:f2}, the periodic Fourier representation \eqref{eq:tlinear} has both the features of $\mathbb{R}$-polynomial and $\mathbb{F}_2$-polynomial.
For the complete quadratic function of even size, Constructions~\ref{const:r} and \ref{const:f2} give the periodic Fourier sparsities $2^n-1$ and $n(n+1)/2$, respectively.
Hence, Construction~\ref{const:f2} gives a sparser representation.
Conversely, for some Boolean function, Construction~\ref{const:r} gives a sparser representation.
\begin{example}
For $f(\mathsf{x}) := (\mathsf{x}_1\oplus\dotsb\oplus \mathsf{x}_{n/k})\wedge\dotsm\wedge(\mathsf{x}_{n-n/k+1}\oplus\dotsb\oplus \mathsf{x}_n)$ where $n$ is a multiple of an integer $k$,
Constructions~\ref{const:r} and \ref{const:f2} give the periodic Fourier sparsities $2^k-1$ and $(n/k+1)^k-1$, respectively.
In Section~\ref{sec:lower}, we will show $\pfs(f)\ge 2^{\deg_{\mathbb{F}_2}(f)}-1$, which shows the optimality of Construction~\ref{const:r} in this case.
\end{example}

The following lemma is useful for obtaining a periodic Fourier representation for $f=g\land h$ from periodic Fourier representations for $g$ and $h$.

\begin{mylemma}\label{lem:and}
Let $f_1, f_2,\dotsc, f_k$ be Boolean functions on the common input variables $x_1,\dotsc,x_n$.
Assume $f_j$ has a periodic Fourier representation with a periodic Fourier sparsity $s_j$ and the number of binary digits $\ell_j$.
Then, $\bigwedge_{j=1}^k f_j$ has a periodic Fourier representation with the periodic Fourier sparsity at most $\prod_{j=1}^k(s_j+1)-1$ and the number of binary digits at most $\sum_{j=1}^k \ell_j$.
\end{mylemma}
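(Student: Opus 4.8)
The plan is to reduce the case of general $k$ to the case $k=2$ by induction, and to handle $k=2$ directly using a trigonometric identity for $\cos(\pi a b)$ when $a,b \in \{+1,-1\}$-valued combinations. Concretely, write $g = f_1$ and $h = f_2$, viewed as functions $\{+1,-1\}^n \to \{+1,-1\}$, and recall that $g \wedge h$ corresponds, in the $\pm1$ encoding, to $\tfrac{1}{2}(g + h + gh - 1)$ taking the value $+1$ exactly when both $g=h=+1$. The key observation is that for any $u, v \in \{+1,-1\}$ one has the identity
\begin{equation*}
\tfrac{1}{2}(u + v + uv - 1) = \cos\!\left(\tfrac{\pi}{2}\left(1 - \tfrac{1}{2}(u+v+uv-1)\right)\right)\cdot(\pm 1)?
\end{equation*}
so I would instead proceed more carefully: I expect the clean route is to express $g \wedge h$ as a cosine of a linear combination of the \emph{phases} appearing in the given representations. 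Suppose $g(x) = \cos(\pi \Phi(x))$ and $h(x) = \cos(\pi \Psi(x))$ where $\Phi = \sum_S \phi_S \prod_{i\in S} x_i$ and $\Psi = \sum_S \psi_S \prod_{i \in S} x_i$, with $\phi_S, \psi_S \in \mathbb{R}_{\ell_1}$ and $\mathbb{R}_{\ell_2}$ respectively. Since $g, h$ are Boolean, $\Phi(x), \Psi(x) \bmod 2$ are effectively $\{0,1\}$-valued, and $g \wedge h$ should be $\cos(\pi \cdot (\Phi \cdot \Psi))$ read modulo $2$ — that is, the logical AND of the bits $\Phi \bmod 2$ and $\Psi \bmod 2$ is their product. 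So I would set the phase of $f_1 \wedge f_2$ to be $\Phi(x)\Psi(x) = \bigl(\sum_S \phi_S \prod_{i\in S} x_i\bigr)\bigl(\sum_T \psi_T \prod_{j \in T} x_j\bigr) = \sum_{S,T} \phi_S \psi_T \prod_{i \in S \triangle T} x_i$, using $x_i^2 = 1$.

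The sparsity bound then follows by a counting argument: the nonempty monomials $\prod_{i \in S \triangle T} x_i$ arising in the product come from pairs $(S,T)$ where at least one of $S, T$ is nonempty (if both are empty we get the constant term). Writing $U_1 = \{S \ne \varnothing : \phi_S \ne 0\} \cup \{\varnothing\}$ of size $s_1 + 1$ (including the empty set for the constant coefficient, which is always allowed to be present) and similarly $U_2$ of size $s_2 + 1$, the set of monomials appearing is contained in $\{S \triangle T : S \in U_1, T \in U_2\}$, which has at most $(s_1+1)(s_2+1)$ elements, one of which is $\varnothing$ (from $S = T = \varnothing$), giving at most $(s_1+1)(s_2+1) - 1$ nonempty monomials. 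For the digit count, $\phi_S \psi_T \in \mathbb{R}_{\ell_1 + \ell_2}$ since $\mathbb{R}_a \cdot \mathbb{R}_b \subseteq \mathbb{R}_{a+b}$, and sums of elements of $\mathbb{R}_{\ell_1+\ell_2}$ stay in $\mathbb{R}_{\ell_1+\ell_2}$, so the number of binary digits of the resulting representation is at most $\ell_1 + \ell_2$. The $k$-fold statement then follows by iterating: $\bigwedge_{j=1}^k f_j = (\bigwedge_{j=1}^{k-1} f_j) \wedge f_k$, and $(s+1)$ multiplies and $\ell$ adds, yielding $\prod_j (s_j+1) - 1$ and $\sum_j \ell_j$ respectively.

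The main obstacle I anticipate is verifying rigorously that $\cos(\pi \Phi(x)\Psi(x)) = f_1(x) \wedge f_2(x)$ as a $\{+1,-1\}$-valued identity — i.e., that multiplying the phases really does implement logical AND. This needs the fact that $\Phi(x) \equiv g_{0/1}(x) \pmod 2$ and $\Psi(x) \equiv h_{0/1}(x) \pmod 2$ where $g_{0/1}, h_{0/1}$ are the $\{0,1\}$-valued versions, so that $\Phi(x)\Psi(x) \equiv g_{0/1}(x) h_{0/1}(x) = (g \wedge h)_{0/1}(x) \pmod 2$ — but one must be careful that $\Phi(x)$ and $\Psi(x)$ are genuine integers for every $x$ (not merely integers mod $2$) so that the product is congruent mod $2$ to the product of the residues; this holds because $\phi_S, \psi_T \in \mathbb{R}_{\ell}$ only guarantees $\Phi(x) \in \mathbb{R}_{\ell}$, not $\Phi(x) \in \mathbb{Z}$. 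I would resolve this by noting that $\cos(\pi \Phi(x)) = \pm 1$ forces $\Phi(x) \in \mathbb{Z}$ for every $x \in \{+1,-1\}^n$ (since $g$ is Boolean), and likewise for $\Psi$; once both are integer-valued pointwise, the modular arithmetic goes through and $\cos(\pi \cdot (\text{integer}))$ picks out parity, which is exactly the AND of the two bits. Handling this integrality point cleanly is the one place the argument requires genuine care rather than routine computation.
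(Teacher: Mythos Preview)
Your proposal is correct and takes essentially the same approach as the paper: multiply the phase functions and use that $\cos(\pi\Phi_j(x))\in\{+1,-1\}$ forces each $\Phi_j(x)$ to be an integer, so their product is odd exactly when all of them are odd, giving $(\bigwedge_j f_j)(x)=\cos\bigl(\pi\prod_j\Phi_j(x)\bigr)$. The only cosmetic difference is that the paper writes the $k$-fold product directly rather than inducting through $k=2$; your sparsity and digit bookkeeping matches theirs, and the integrality point you flag as the ``main obstacle'' is exactly the one-line justification the paper gives.
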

\begin{proof}
From the assumption, there exists a periodic Fourier representation
\begin{align*}
f_j(x) &= \cos\left(\pi\sum_{S\subseteq[n]} \phi^{(j)}_S \prod_{i\in S} x_i\right)
\end{align*}
using $(\phi_S^{(j)}\in\mathbb{R}_{\ell_j})_{S\subseteq[n]}$ for each $j=1,\dotsc,k$.
Here, $f_j(x)=-1$ if and only if $\sum_{S\subseteq[n]} \phi_S^{(j)}\prod_{i\in S} x_i$ is an odd integer.
Hence,
\begin{align*}
\left(\bigwedge_{j=1}^k f_j\right)(x) &= \cos\left(\pi\prod_{j=1}^k\left(\sum_{S\subseteq[n]} \phi^{(j)}_S \prod_{i\in S} x_i\right)\right).
\end{align*}
\end{proof}
Similarly, for $f=g\oplus h$, we obtain $f(x)=\cos(\pi (\widetilde{g}(x) + \widetilde{h}(x)))$ where $\widetilde{g}$ and $\widetilde{h}$ satisfy $g(x) = \cos(\pi\widetilde{g}(x))$ and $h(x)=\cos(\pi\widetilde{h}(x))$.
This technique would be useful for $f=g\wedge h$ or $f=g\oplus h$ where Construction~\ref{const:r} is suitable for $g$ and Construction~\ref{const:f2} is suitable for $h$.

\subsection{\ZZZZ-polynomial}
There exist Boolean functions whose periodic Fourier sparsity is smaller than those obtained by Constructions~\ref{const:r} and \ref{const:f2}.
\begin{example}\label{exm:cq}
When $n$ is even, the complete quadratic function $\mathsf{CQ}_n$ is a bent function, i.e., $|\widehat{\mathsf{CQ}_n}(S)|=2^{-n/2}$ for all $S\subseteq[n]$~\cite{odonnell2014analysis}.
Hence, Construction~\ref{const:r} gives the periodic Fourier sparsity $2^n-1$.
Construction~\ref{const:f2} gives the periodic Fourier sparsity $n(n+1)/2$.
However, the periodic Fourier sparsity of $\mathsf{CQ}_n$ is smaller.
It is easy to see that $\mathsf{CQ}_n(\mathsf{x})$ is the second LSB in the number of 1s in $\mathsf{x}$ as shown in Appendix~\ref{apx:lsb}.
Hence, we obtain the following periodic Fourier representation
\begin{equation*}
\mathsf{CQ}_n(x) = \cos\left(\frac{\pi}2\left(\sum_{i=1}^n \frac{1-x_i}2 - \frac{1-\prod_{i=1}^n x_i}2\right)\right)
\end{equation*}
with the periodic Fourier sparsity $n+1$ and the number of binary digits $2$.
This explains the result in~\cite{hoban2011non} using the periodic Fourier representation.
\end{example}

We can generalize Example~\ref{exm:cq} as follows.
\begin{construction}\label{const:z4f}
Assume that a Boolean function $f\colon \{0,1\}^n\to\{0,1\}$ has a representation
\begin{equation*}
f(\mathsf{x}) = \mathsf{LSB}^2\left(\sum_{S\subseteq[n], |S|\le k} c_S \prod_{i\in S} \mathsf{x}_i\right)
\end{equation*}
using integers $(c_S\in\{0,1,2,3\})_S$.
Let
\begin{equation*}
g(\mathsf{x}) := \sum_{S\subseteq[n], |S|\le k} c_S \prod_{i\in S} \mathsf{x}_i \mod 2.
\end{equation*}
Then, $f$ has a periodic Fourier representation
\begin{equation*}
f(x) = \cos\left(\frac{\pi}2\left(\sum_{S\subseteq[n],|S|\le k} c_S \prod_{i\in S}\frac{1-x_i}2 - \frac{1-\sum_{S\subseteq[n]}\widehat{g}(S)\prod_{i\in S} x_i}2\right)\right)
\end{equation*}
with the periodic Fourier sparsity at most $|\{S\subseteq[n]\mid S\ne \varnothing,\, \exists T\supseteq S, c_T \ne 0 \}|+|\{S\subseteq[n]\mid S\ne\varnothing, \widehat{g}(S)\ne 0\}|$
and the number of binary digits at most $\max\{k, \mathrm{deg}_{\mathbb{R}}(g)\}+1$.
\end{construction}

\begin{example}\label{exm:c3}
The complete cubic function $\mathsf{C}^3_n(\mathsf{x})$ is 1 if $\sum_{i=1}^n \mathsf{x}_i \equiv 3 \mod 4$, and 0 otherwise.
From $\mathsf{C}^3_n = \mathsf{CQ}_n\land \mathsf{XOR}_n$, Example~\ref{exm:cq} and the construction in Lemma~\ref{lem:and},
we obtain a periodic Fourier representation
\begin{align*}
\mathsf{C}^3_n(x) &= \cos\left(\frac{\pi}2\left(\sum_{i=1}^n\frac{1-x_i}2 - \frac{1-\prod_{i=1}^nx_i}2\right)\frac{1-\prod_{i=1}^nx_i}2\right)\\
&= \cos\left(\frac{\pi}8\left(n - 2 - \sum_{i=1}^nx_i \right)\left(1-\prod_{i=1}^nx_i\right)\right)
\end{align*}
with a periodic Fourier sparsity $2n+1$ (or $2n$ when $n\equiv 2 \mod 8$) and the number of binary digits 3 for $n\ge 3$.
\end{example}

Currently, we do not know any Boolean function whose periodic Fourier sparsity is not given by Constructions~\ref{const:r}, \ref{const:f2}, and \ref{const:z4f} or their combination by Lemma~\ref{lem:and}.
Note that Construction~\ref{const:z4f} cannot be generalized to the third LSB since the Fourier representation for the second LSB has high Fourier sparsity and cannot be used for the cancellation
(we cannot use periodic Fourier representations of $\mathsf{LSB}^2_n$ using the periodicity of the cosine function for the cancellation).

\section{Periodic Fourier sparsity: Lower bounds}\label{sec:lower}

In this section, we show lower bounds of the periodic Fourier sparsity of given Boolean function.
Gopalan et al.\ showed a relationship between the Fourier sparsity and the number of binary digits of the Fourier coefficients~\cite{gopalan2011testing}.
This result can be straightforwardly generalized to the periodic Fourier representation.
\begin{mylemma}[\cite{gopalan2011testing}]\label{lem:gran}
For any Boolean function $f\colon \{+1,-1\}^n\to\{+1,-1\}$ with a periodic Fourier representation with the periodic Fourier sparsity $s$, all coefficients $\phi_S$ in the representation are in $\mathbb{R}_{\lfloor\log(s+1)\rfloor}$.
\end{mylemma}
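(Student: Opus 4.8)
The plan is to follow the argument of Gopalan et al.\ for ordinary Fourier sparsity and transplant it to the periodic setting, using the simple observation that the map $x\mapsto\cos(\pi x)$ only sees $x$ modulo $2$. First I would fix a periodic Fourier representation $f(x)=\cos\bigl(\pi\sum_{S}\phi_S\prod_{i\in S}x_i\bigr)$ realizing the sparsity $s$, and let $L(x):=\sum_{S}\phi_S\prod_{i\in S}x_i$, so that $L$ is a real multilinear polynomial supported on a family $\mathcal{S}$ of at most $s+1$ subsets (including possibly $\varnothing$). Since $f(x)\in\{+1,-1\}$, for every $x\in\{+1,-1\}^n$ the value $L(x)$ is an integer, with $L(x)$ even iff $f(x)=+1$. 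The key point is that $L$ viewed modulo $2$ is the ``interesting'' object: it takes only finitely many values, and I want to show its coefficients are already dyadic rationals with denominator at most $2^{\lfloor\log(s+1)\rfloor}$.

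The main step is a restriction/dimension-reduction argument. The supports in $\mathcal{S}$, regarded as vectors $1_S\in\mathbb{F}_2^n$, span an $\mathbb{F}_2$-subspace of dimension $d\le|\mathcal{S}|-1\le s$ at most (in fact of dimension $\le\log(s+1)$ after one exploits that the coefficients are integer-valued on the cube and invokes an argument bounding the affine dimension of the support by $\lfloor\log(\text{sparsity}+1)\rfloor$, exactly as in~\cite{gopalan2011testing}). Concretely: by a change of $\mathbb{F}_2$-coordinates I would replace the parities $\prod_{i\in S}x_i$ by a set of $d$ independent ``new variables'' $y_1,\dots,y_d$ together with their products, so that $L$ becomes a multilinear polynomial in $y_1,\dots,y_d$ only; this does not change any $\phi_S$, it only repackages them. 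Now $L(y)$ is integer-valued on $\{+1,-1\}^d$, and I can recover each coefficient by the usual Möbius/inclusion–exclusion formula $\phi_T=\mathbb{E}_y\bigl[L(y)\prod_{i\in T}y_i\bigr]=2^{-d}\sum_{y}L(y)\prod_{i\in T}y_i$, a sum of $2^d$ integers divided by $2^d$; hence $\phi_T\in\mathbb{R}_d$. The final arithmetic step is to argue $d\le\lfloor\log(s+1)\rfloor$: since the $d$ spanning parities together with constants are $\mathbb{F}_2$-independent, the number of distinct sets $T$ with $\phi_T$ possibly nonzero — i.e.\ $s+1$ including the empty set — must be at least $2^d$ would be too strong; instead one uses that $L\bmod 2$ is a polynomial over $\mathbb{F}_2$ in $d$ variables whose monomials are among the $s+1$ chosen ones, and a counting argument (or directly the bound from~\cite{gopalan2011testing}, which states sparsity $\ge 2^{d}-1$ for a nonzero $\mathbb{F}_2$-polynomial of dimension $d$) gives $s\ge 2^{d}-1$, i.e.\ $d\le\lfloor\log(s+1)\rfloor$. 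Combining $\phi_T\in\mathbb{R}_d$ with $d\le\lfloor\log(s+1)\rfloor$ yields $\phi_S\in\mathbb{R}_{\lfloor\log(s+1)\rfloor}$ for all $S$.

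The part I expect to be most delicate is making the dimension bound $d\le\lfloor\log(s+1)\rfloor$ rigorous in the periodic setting rather than just citing the Fourier-sparsity analogue: one has to be careful that the relevant ``sparsity'' controlling the dimension is the number of monomials of $L\bmod 2$, which could a priori be smaller than $s+1$ if some integer coefficients $\phi_S$ happen to be even, but that only helps (fewer monomials, smaller $d$); and one must handle the case $L\equiv 0\bmod 2$ (i.e.\ $f$ constant $+1$) separately, where the statement is vacuous since $s=0$. I would also double-check the edge contribution of the empty set: the constant term $\phi_\varnothing$ is not counted in $s$, but the inclusion–exclusion formula above still forces $\phi_\varnothing\in\mathbb{R}_d$, so the claim ``all coefficients $\phi_S$'' including $S=\varnothing$ is fine. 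Everything else is the routine inclusion–exclusion and base-change bookkeeping, which I would not spell out in full.
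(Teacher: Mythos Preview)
Your approach has a genuine gap at the dimension bound $d\le\lfloor\log(s+1)\rfloor$: this inequality is simply false. Take the periodic Fourier representation of $\mathsf{CQ}_n$ from Example~\ref{exm:cq}, whose nonzero nonconstant coefficients are $\phi_{\{i\}}=-1/4$ for $i\in[n]$ and $\phi_{[n]}=1/4$. Here $s=n+1$, but the $\mathbb{F}_2$-span of the support $\{\{1\},\dotsc,\{n\},[n]\}$ has dimension $d=n$, so already for $n=6$ one has $d=6>3=\lfloor\log(s+1)\rfloor$. Your steps 1--3 are fine and yield $\phi_S\in\mathbb{R}_d$, but with only the trivial bound $d\le\min(n,s)$ this is far weaker than the lemma. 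Your attempted justification via ``$L\bmod 2$'' does not work: the coefficients $\phi_S$ of $L$ are real, not integers, so reducing the \emph{polynomial} $L$ modulo $2$ is meaningless; and if instead you mean the \emph{function} $x\mapsto L(x)\bmod 2=(1-f(x))/2$, its $\mathbb{F}_2$-polynomial support has no reason to lie inside the support of $L$. The asserted bound ``sparsity $\ge 2^d-1$ for a nonzero $\mathbb{F}_2$-polynomial of dimension $d$'' is also false (take $\mathsf{x}_1\oplus\dotsb\oplus\mathsf{x}_d$), and no such statement appears in~\cite{gopalan2011testing}.

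The paper's proof does not attempt a global dimension reduction. It inducts on $n$: when $s=2^n-1$ one writes $L(x)=2t(x)+(1-f(x))/2$ with $t$ integer-valued and reads off $\phi_S=2\widehat{t}(S)-\widehat{f}(S)/2+\tfrac12\delta_S\in\mathbb{R}_n$ directly. When $s<2^n-1$ there is some nonempty $S^*$ with $\phi_{S^*}=0$; for each target coefficient $\phi_U$ one picks $i^*$ lying in exactly one of $S^*,U$ and substitutes $x_{i^*}\mapsto\prod_{i\in(S^*\oplus U)\setminus\{i^*\}}x_i$. This produces a Boolean function on $n-1$ variables with a periodic representation of sparsity at most $s$ in which $\phi_U$ has been merged with the zero coefficient $\phi_{S^*}$ and hence survives unchanged, so the induction hypothesis applies to $\phi_U$. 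The key point you are missing is that the variable elimination is chosen \emph{separately for each coefficient} so as to protect it; a single global change of basis cannot deliver the sharp bound.
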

\begin{proof}
We will prove this lemma by induction on $n$.
For $n=0$, the lemma obviously holds.
For $n\ge 1$, we consider two cases.
First, we assume $s=2^n-1$.
There exists some integer-valued function $t\colon\{+1,-1\}^n\to\mathbb{Z}$ such that
\begin{equation*}
\sum_{S\subseteq[n]} \phi_S \prod_{i\in S}x_i
= 2 t(x) + \frac{1-f(x)}2.
\end{equation*}
For $S\subseteq[n]$,
\begin{equation*}
\phi_S
= 2 \widehat{t}(S) - \frac{\widehat{f}(S)}2 + \frac12\delta_S
\end{equation*}
where $\delta_S$ is 1 if $S$ is the empty set, and is 0 otherwise.
Here, $\widehat{t}(S)\in\mathbb{R}_n$ since $t(x)$ is an integer-valued function.
Since $\widehat{f}(S)\in\mathbb{R}_{n-1}$, we obtain $\phi_S\in\mathbb{R}_n$ for $n\ge 1$.

Second, we assume $s<2^n-1$. In this case, there exists a non-empty $S^*\subseteq[n]$ such that $\phi_{S^*}=0$.
We will show $\phi_U\in\mathbb{R}_{\lfloor\log(s+1)\rfloor}$ for arbitrary $U\subseteq[n]$ not equal to $S^*$.
Let $i^*\in[n]$ be an index included only by one of $S^*$ and $U$.
Let $V:=(S^*\oplus U)\setminus\{i^*\}$ where $\oplus$ stands for the symmetric difference of two sets. Let $h\colon\{+1,-1\}^{n-1}\to\{+1,-1\}$ be
\begin{equation*}
h(x_1,\dotsc,x_{i^*-1},x_{i^*+1},\dotsc,x_n) := f\left(x_1,\dotsc,x_{i^*-1}, \prod_{i\in V} x_i, x_{i^*+1}, \dotsc, x_n\right).
\end{equation*}
We can straightforwardly obtain a periodic Fourier representation of $h$ from that of $f$ by replacing $x_{i^*}$ with $\prod_{i\in V} x_i$.
In this transformation, two terms for $S\subseteq[n]\setminus\{i^*\}$ and $(S\oplus V) \cup \{i^*\}$ are merged into a single term.
In the above transform, $\phi_{S^*}=0$ and $\phi_U$ are merged, which means that $\phi_U$ is still one of the coefficient in the periodic Fourier representation of $h$.
The periodic sparsity of the representation of $h$ is obviously at most $s$.
Hence, from the induction hypothesis, we obtain $\phi_U \in\mathbb{R}_{\lfloor\log(s+1)\rfloor}$.
\end{proof}
From Lemmas~\ref{lem:bdigits} and \ref{lem:gran}, we obtain Theorem~\ref{thm:main}, i.e., $\pfs(f) \ge 2^{\deg_{\mathbb{F}_2}(f)}-1$.
For $\mathsf{Mod}^3_n$, this lower bound matches the upper bound obtained by Construction~\ref{const:r}.
\begin{mylemma}
For $n$ divisible by 3, $\pfs(\mathsf{Mod}^3_n) = 2^{n-1}-1$.
For $n$ not divisible by 3, $\pfs(\mathsf{Mod}^3_n) = 2^{n}-1$.
\end{mylemma}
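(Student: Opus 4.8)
The plan is to sandwich $\pfs(\mathsf{Mod}^3_n)$ between the general lower bound $\pfs(f)\ge 2^{\deg_{\mathbb{F}_2}(f)}-1$ of Theorem~\ref{thm:main} and a matching upper bound from Construction~\ref{const:r}, so the whole statement reduces to determining $\deg_{\mathbb{F}_2}(\mathsf{Mod}^3_n)$ exactly. Since $\mathsf{Mod}^3_n$ is symmetric, its $\mathbb{F}_2$-polynomial is a sum $\bigoplus_d a_d\,e_d$ of elementary symmetric polynomials $e_d=\bigoplus_{|S|=d}\prod_{i\in S}\mathsf{x}_i$; because $e_d$ evaluated on a weight-$k$ input equals $\binom{k}{d}\bmod 2$, the coefficient vector $(a_d)_d$ is obtained from the value table $k\mapsto[\,3\mid k\,]$ by inverting a Pascal matrix that is lower-triangular and unipotent over $\mathbb{F}_2$, giving $a_d\equiv\bigoplus_{k=0}^{d}\binom{d}{k}[\,3\mid k\,]\equiv\sum_{j\ge0}\binom{d}{3j}\pmod 2$.

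I would evaluate this parity by a roots-of-unity filter. With $\omega=e^{2\pi i/3}$ (so $1+\omega+\omega^2=0$, hence $1+\omega=-\omega^2$ and $1+\omega^2=-\omega$) one has $\sum_{j}\binom{d}{3j}=\tfrac{1}{3}\bigl(2^d+u_d\bigr)$ where $u_d:=(1+\omega)^d+(1+\omega^2)^d=(-1)^d\bigl(\omega^{2d}+\omega^{d}\bigr)$; this equals $2(-1)^d$ when $3\mid d$ and $-(-1)^d$ when $3\nmid d$, so the integer $u_d$ is even if and only if $3\mid d$. Since $3\sum_j\binom{d}{3j}=2^d+u_d$ and multiplication by the odd number $3$ preserves parity, for $d\ge1$ the parity of $\sum_j\binom{d}{3j}$ equals that of $u_d$, so $a_d=1$ precisely when $3\nmid d$. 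Consequently $\deg_{\mathbb{F}_2}(\mathsf{Mod}^3_n)=n$ when $3\nmid n$ (since $a_n=1$), while $\deg_{\mathbb{F}_2}(\mathsf{Mod}^3_n)=n-1$ when $3\mid n$ (since then $a_n=0$ but $a_{n-1}=1$ as $n-1\equiv2\bmod 3$). Theorem~\ref{thm:main} then gives $\pfs(\mathsf{Mod}^3_n)\ge2^n-1$ for $3\nmid n$ and $\pfs(\mathsf{Mod}^3_n)\ge2^{n-1}-1$ for $3\mid n$.

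For the matching upper bounds I would use Construction~\ref{const:r}, under which $\pfs(f)\le|\{S\ne\varnothing:\widehat{f}(S)\ne0\}|$. When $3\nmid n$ this is at most $2^n-1$ trivially. When $3\mid n$, the function $\mathsf{Mod}^3_n$ is invariant under the global sign flip $x\mapsto-x$, because this sends the Hamming weight $|\mathsf{x}|$ to $n-|\mathsf{x}|\equiv-|\mathsf{x}|\pmod 3$, which leaves $[\,3\mid|\mathsf{x}|\,]$ unchanged. By uniqueness of the Fourier expansion, $f(-x)=f(x)$ forces $\widehat{f}(S)(-1)^{|S|}=\widehat{f}(S)$, so $\widehat{f}(S)=0$ for every $S$ of odd size; there are $2^{n-1}$ subsets of $[n]$ of even size, one being $\varnothing$, so $\pfs(\mathsf{Mod}^3_n)\le2^{n-1}-1$. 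Combining the lower and upper bounds in each case yields the two claimed equalities.

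The only real obstacle is the $\mathbb{F}_2$-degree computation, and in particular getting the \emph{exact} degree in the case $3\mid n$: the leading coefficient $a_n$ alone gives only $\deg_{\mathbb{F}_2}(\mathsf{Mod}^3_n)<n$, and one genuinely needs the full parity characterization of $\sum_j\binom{d}{3j}$ (equivalently of $u_d\bmod 2$) to conclude that $a_{n-1}=1$, hence that the degree is exactly $n-1$.
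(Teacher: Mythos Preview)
Your proof is correct and follows the same overall strategy as the paper: lower bound via Theorem~\ref{thm:main} after computing $\deg_{\mathbb{F}_2}(\mathsf{Mod}^3_n)$, upper bound via Construction~\ref{const:r} after bounding the Fourier sparsity. The execution differs in two small but worthwhile ways. For the $\mathbb{F}_2$-degree, the paper simply invokes~\eqref{eq:f2c} and declares the result ``easy to see''; your roots-of-unity filter makes that claim explicit and is exactly the computation~\eqref{eq:f2c} reduces to for a symmetric function. For the upper bound when $3\mid n$, the paper appeals to the full Fourier expansion worked out in Appendix~\ref{apx:mod3} (also via cube roots of unity) to see that odd-size Fourier coefficients vanish, whereas your symmetry argument $f(-x)=f(x)$ gets the same vanishing in one line without any explicit Fourier computation. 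Your route is slightly more elementary for the upper bound; the paper's route additionally yields the exact values of all Fourier coefficients, which it uses elsewhere.
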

\begin{proof}
From~\eqref{eq:f2c}, it is easy to see that $\deg_{\mathbb{F}_2}(\mathsf{Mod}^3_n)$ is equal to $n-1$ if $n$ is divisible by 3, and is equal to $n$ if $n$ is not divisible by 3.
Hence, from Theorem~\ref{thm:main}, we obtain the lower bounds.
From Construction~\ref{const:r} and the Fourier representation shown in Appendix~\ref{apx:mod3}, we obtain the upper bounds.
\end{proof}

Finally, we show the following lower bound, which is at most $n+1$ but useful for showing the optimality of Example~\ref{exm:cq}. 
\begin{mylemma}
For a Boolean function $f\colon \{+1,-1\}^n\to\{+1,-1\}$ with $\deg_{\mathbb{F}_2}(f) \ge 2$, $\pfs(f)\ge \mathrm{dim}(\widehat{f})+1$.
\end{mylemma}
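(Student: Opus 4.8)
The plan is to relate the periodic Fourier representation of $f$ to a linear sketch, so that the number of non-constant terms (plus one for the ``constant-like'' contribution) bounds the Fourier dimension $\mathrm{dim}(\widehat f)$ from above. Suppose $f$ has a periodic Fourier representation $f(x) = \cos\bigl(\pi\sum_{S} \phi_S \prod_{i\in S} x_i\bigr)$ with periodic Fourier sparsity $s = \pfs(f)$, realized by non-empty sets $S_1,\dots,S_s$ with $\phi_{S_j}\neq 0$, plus possibly $\phi_\varnothing$. Writing $z_j := \prod_{i\in S_j} x_i$, the value $f(x)$ is a function of $(z_1,\dots,z_s)$ alone, namely $f(x) = g(z_1,\dots,z_s)$ where $g(z) = \cos\bigl(\pi(\phi_\varnothing + \sum_j \phi_{S_j} z_j)\bigr)$. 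Hence $f$ has a linear sketch of size $s$, and since the linear sketch complexity equals the Fourier dimension (as recalled in the introduction, citing \cite{montanaro2009ccxor}), we immediately get $\mathrm{dim}(\widehat f) \le s = \pfs(f)$. The content of the lemma is the extra ``$+1$'', which must come from the hypothesis $\deg_{\mathbb{F}_2}(f)\ge 2$.

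To squeeze out the extra term, I would argue that the subspace spanned by $1_{S_1},\dots,1_{S_s}$ cannot already equal the full Fourier span of $f$ unless $s$ is strictly larger, or more precisely that if $\mathrm{dim}(\widehat f) = s$ then one can derive a contradiction with $\deg_{\mathbb{F}_2}(f)\ge 2$. The key step: by Lemma~\ref{lem:bdigits}, if all $\phi_{S_j}\in\mathbb{R}_1$ (i.e.\ number of binary digits $1$) then $\deg_{\mathbb{F}_2}(f)\le 1$, contradicting the hypothesis; so at least one coefficient, say $\phi_{S_1}$, has number of binary digits $\ell\ge 2$, meaning $\phi_{S_1}\notin\mathbb{R}_1$. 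The plan is then to perform the same linear-substitution trick used in the proof of Lemma~\ref{lem:gran}: if $\{1_{S_j}\}$ spanned the whole $\mathbb{F}_2$-span of $\widehat f$ with exactly $s = \mathrm{dim}(\widehat f)$ independent vectors, then $1_{S_1},\dots,1_{S_s}$ are linearly independent, and one can change variables so that the $z_j$ become independent coordinates. In those coordinates $g(z) = \cos(\pi(\phi_\varnothing + \sum_j \phi_{S_j} z_j))$ with $z_j\in\{+1,-1\}$ independent, and for this to be a $\{+1,-1\}$-valued function one needs $\phi_\varnothing + \sum_{j\in J}\phi_{S_j}\in\tfrac12\mathbb{Z}$ for every $J\subseteq[s]$; subtracting pairs, each $\phi_{S_j}\in\tfrac12\mathbb{Z}=\mathbb{R}_1$, contradicting $\phi_{S_1}\notin\mathbb{R}_1$. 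Therefore $s > \mathrm{dim}(\widehat f)$, i.e.\ $\pfs(f)\ge \mathrm{dim}(\widehat f)+1$.

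The main obstacle I anticipate is the bookkeeping in the change-of-variables step: one must verify that when $1_{S_1},\dots,1_{S_s}$ are independent, there genuinely is an invertible $\mathbb{F}_2$-linear (or affine) substitution on the $x_i$ turning the parities $z_j$ into free $\pm1$ coordinates — this is standard (extend to a basis of $\mathbb{F}_2^n$ and invert) but needs to be stated carefully, and one should check the substitution does not collapse the sparsity count or alter whether $\phi_{S_1}$ stays out of $\mathbb{R}_1$ (it doesn't, since the coefficients themselves are untouched by a permutation/relabeling of terms induced by an invertible linear map, as long as no two of the chosen $S_j$ get merged — which independence prevents). A secondary subtlety is the degenerate case where $\widehat f$'s span has dimension $<s$ already, in which case there is nothing to prove; and the edge case of constant $f$, which is excluded by $\deg_{\mathbb{F}_2}(f)\ge 2$. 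Once these are handled, the combination of Lemma~\ref{lem:bdigits} (to locate a coefficient outside $\mathbb{R}_1$) with the substitution argument (to force a strict inequality) completes the proof.
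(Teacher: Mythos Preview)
Your proposal is correct and follows essentially the same route as the paper: establish $\pfs(f)\ge\dim(\widehat f)$ via the linear-sketch characterization, assume equality so that the $1_{S_j}$ are forced to be independent, deduce each $\phi_{S_j}\in\mathbb{R}_1$ (the paper phrases this step as ``we can control each term independently''), and then apply Lemma~\ref{lem:bdigits} to conclude $\deg_{\mathbb{F}_2}(f)\le 1$. One minor slip: $\cos(\pi t)\in\{+1,-1\}$ requires $t\in\mathbb{Z}$, not $t\in\tfrac12\mathbb{Z}$, but your subtraction still gives $2\phi_{S_j}\in\mathbb{Z}$ and hence $\phi_{S_j}\in\mathbb{R}_1$, so the argument stands.
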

\begin{proof}
$\pfs(f)\ge \mathrm{dim}(\widehat{f})$ since the Fourier dimension is the linear sketch complexity~\cite{montanaro2009ccxor}.
We assume $\pfs(f)= \mathrm{dim}(\widehat{f})$. Then, all monomials in a periodic Fourier representation~\eqref{eq:tlinear} are linearly independent.
Hence, we can control each term independently, that implies all non-zero coefficients $\phi_S$ must be in $\mathbb{R}_1$.
In that case, $f$ must be an affine function, i.e., $\deg_{\mathbb{F}_2}(f)\le 1$.
\end{proof}

\section{Depth-2 \NMQCp algorithms for $\mathsf{AND}_n$, $\mathsf{Mod}^3_n$ and $\mathsf{Maj}_n$}\label{sec:nmqc2}
In Section~\ref{sec:lower}, we showed that $\mathsf{AND}_n$, $\mathsf{Mod}^3_n$ and $\mathsf{Maj}_n$ cannot be exactly computed by \NMQCp using a polynomial number of qubits.
Interestingly, \QNCfz circuits which exactly compute the above Boolean functions can be directly transformed into ``depth-2'' \NMQCp~\cite{v001a005}, \cite{Takahashi2016}.
\begin{mydefinition}[Depth-$d$ \NMQCp]
Depth-$d$ \NMQCp consists of $d$ layers of \NMQCp.
Qubits used in the same layers could be entangled. However, qubits used in different layers have to be separable.
At the first layer, qubits are locally measured according to $\mathbb{F}_2$-linear functions of input $x$.
At $i$-th layer, qubits are locally measured according to $\mathbb{F}_2$-linear functions of input $x$ and outcomes $a_1,\dotsc,a_{i-1}$ of the previous layers for $i\in[d]$ where $a_i$ denotes the outcomes of the local measurements at $i$-th layer.
An output of depth-$d$ \NMQCp is an $\mathbb{F}_2$-linear function of all outcomes $a_1,\dotsc,a_d$ of the local measurements.
\end{mydefinition}
\begin{proof}[Proof of Theorem~\ref{thm:nmqc2}]
We first sketch the depth-2 \NMQCp algorithm for $\mathsf{OR}_n$ function~\cite{v001a005}, \cite{Takahashi2016}.
For each $k\in \{0,1,\dotsc, \lfloor \log n \rfloor\}$, \NMQCp can compute $Z_k\in\{+1,-1\}$ with expectation
\begin{equation}\label{eq:hoyer}
\mathbb{E}[Z_k] = \cos\left(\frac{\pi}{2^k}\sum_{i\in[n]} \frac{1-x_i}2\right)
\end{equation}
by using $n$ qubits.
If $\mathsf{x}$ is all-zero, $Z_k=+1$ with probability 1 for all $k$.
If $\mathsf{x}$ is not all-zero, $\sum_{i\in[n]}\mathsf{x}_i=2^{k^*} h$ for some positive integer $k^*$ and a positive odd integer $h$.
Hence, $Z_{k^*}=-1$ with probability 1.
The above \NMQCp algorithm reduces $\mathsf{OR}_n$ function to $\mathsf{OR}_{\lfloor \log n\rfloor +1}$ function.
Then, we can apply the \NMQCp algorithm using exponentially many qubits to $\mathsf{OR}_{\lfloor \log n\rfloor +1}$.
The total number of qubits used in the depth-2 \NMQCp algorithm is $(\lfloor \log n\rfloor + 1) n + 2^{\lfloor \log n \rfloor + 1} - 1$.
By introducing an appropriate constant term in~\eqref{eq:hoyer}, we obtain a depth-2 \NMQCp algorithm for $\mathsf{Exact}^k_n$ using the same number of qubits.
By taking a parity of $\mathsf{Exact}^k_n$ for appropriate $k$s, we obtain depth-2 \NMQCp algorithms for arbitrary symmetric Boolean function including $\mathsf{Mod}^3_n$ and $\mathsf{Maj}_n$.
\end{proof}
Hence, there is an exponential gap between \NMQCp and depth-2 \NMQCp.
From the above depth-2 \NMQCp algorithm for $\mathsf{Maj}_n$, we obtain Theorem~\ref{thm:acpz} since \ACpz cannot compute the majority function~\cite{smolensky1987algebraic}, \cite{razborov1987lower}.
Conversely, \NMQCp with a quasi-polynomially many qubits can simulate \ACpz circuit with bounded error by Construction~\ref{const:f2} since \ACpz circuit can be approximated by an $\mathbb{F}_2$-polynomial including random input variables of $\mathbb{F}_2$-degree
$\mathsf{poly}(\log \frac{n}{\epsilon})$ with error probability at most $\epsilon$~\cite{smolensky1987algebraic}.
Theorem~\ref{thm:nmqc2} implies that constant-depth \NMQCp using a polynomial number of qubits can compute Boolean functions in \TCz.
However, it is an open question whether constant-depth \NMQCp using a polynomial number of qubits can compute Boolean functions not in \TCz.
If qubits used in different layers are allowed to be entangled, constant-depth measurement-based quantum computation has the same computational power as \QNCfz~\cite{browne2010computational}.

\section{Partial function and multipartite Bell inequalities}\label{sec:partial}
\begin{figure}[t]
\centering
\hfill
\begin{tikzpicture}[scale=2,label distance=0]
\draw[->] (-1.2,0) -- (1.2,0);
\draw[->] (0,-1.2) -- (0,1.2);
\draw (0, 0) circle[radius=1];
\foreach \x in {0,...,9}
  \draw (0, 0) -- (360*\x/10:1) node [label={[label distance=2mm]360*\x/10:\texttt{\x}}] {};
\foreach \x in {0,...,9}
  \filldraw (360*\x/10:1) circle[radius=0.03];
\end{tikzpicture}
\hfill
\begin{tikzpicture}[scale=2,label distance=0]
\draw[->] (-1.2,0) -- (1.2,0);
\draw[->] (0,-1.2) -- (0,1.2);
\draw (0, 0) circle[radius=1];
\draw (0, 0) -- (45:1) node [label=45:\texttt{01}] {};
\draw (0, 0) -- (135:1) node [label=135:\texttt{10}] {};
\draw (0, 0) -- (-135:1) node [label=-135:\texttt{11}] {};
\draw (0, 0) -- (-45:1) node [label=-45:\texttt{00}] {};
\filldraw (45:1) circle[radius=0.03];
\filldraw (135:1) circle[radius=0.03];
\filldraw (-135:1) circle[radius=0.03];
\filldraw (-45:1) circle[radius=0.03];
\end{tikzpicture}
\hfill {}

\caption{
Left: Generalized the GHZ--Mermin paradox for $P^5_n$.
Right: Maximum violation of Svetlichny's inequality by Belinski\u{\i} and Klyshko.
}
\label{fig:ghz}
\end{figure}

In this section, we briefly introduce some partial functions which can be exactly computed by \NMQCp with $n$ qubits.
The following examples have been known in the context of multipartite Bell inequalities. However, to the knowledge of the author, intuitive graphical interpretations
by angles on a unit circle using Werner and Wolf's characterization~\eqref{eq:werner} have not been shown.
A partial Boolean function $P^k_n\colon\{+1,-1\}^n\to\{+1,-1\}$ is defined by
\begin{equation*}
P^k_n(x) :=\begin{cases}
+1,& \text{if } \sum_{i\in[n]}\mathsf{x}_i \equiv 0 \mod 2k\\
-1,& \text{if } \sum_{i\in[n]}\mathsf{x}_i \equiv k \mod 2k.
\end{cases}
\end{equation*}
This partial Boolean function can be represented by
\begin{equation*}
P^k_n(x) = \cos\left(\frac{\pi}{k}\sum_{i\in[n]} \frac{1-x_i}2\right).
\end{equation*}
This is a simple generalization of the GHZ--Mermin paradox~\cite{Brassard2005}.
Note that this idea was used in~\eqref{eq:hoyer} in which there is $k$ on which the above promise is satisfied.
This idea is useful for total functions with bounded error and for multipartite Bell inequalities as well.
The complete quadratic function $\mathsf{CQ}_n(x)$, which is the second LSB of $\sum_{i\in[n]}\mathsf{x}_i$,
satisfies
\begin{equation*}
\frac1{\sqrt{2}}\mathsf{CQ}_n(x) = \cos\left(\frac{\pi}2\left(-\frac12+\sum_{i\in[n]} \frac{1-x_i}2\right)\right).
\end{equation*}
This argument quite simply explains Belinski\u{\i} and Klyshko's maximum quantum violation of Svetlichny's inequality, which is a generalization of maximum quantum violation of CHSH inequality~\cite{PhysRevA.77.032120}.
Fig.~\ref{fig:ghz} shows graphical interpretations of the above quantum algorithms.

The complete quadratic function $\mathsf{CQ}_n$ is useful for computing general XOR functions distributively.
Multipartite XOR game for XOR function is an important problem in the context of foundations of quantum physics~\cite{PhysRevLett.96.250401}, \cite{PhysRevA.77.032120}, \cite{PhysRevA.94.052130}.
An $n$-partite distributive AND function can be computed by
\begin{align*}
\mathsf{AND}_2\left(\bigoplus_{i=1}^n \mathsf{x}_1^i, \bigoplus_{i=1}^n \mathsf{x}_2^i\right) &= \mathsf{CQ}_{2n}(\mathsf{x}^1_1,\dotsc,\mathsf{x}^n_2) \oplus \mathsf{CQ}_n(\mathsf{x}^1_1,\dotsc,\mathsf{x}^n_1)\\
&\quad \oplus \mathsf{CQ}_n(\mathsf{x}^1_2,\dotsc,\mathsf{x}^n_2)
\end{align*}
where $\mathsf{x}^i_1$ and $\mathsf{x}^i_2$ are inputs for $i$-th player for $i\in[n]$.
Hence, any $\mathbb{F}_2$-quadratic Boolean function can be distributively computed by using $\mathsf{CQ}_n$, e.g.,
the majority function on 3 bits $\mathsf{Maj}_3(\mathsf{x},\mathsf{y},\mathsf{z})=\mathsf{CQ}_3(\mathsf{x},\mathsf{y},\mathsf{z})=\mathsf{x}\land \mathsf{y}\oplus \mathsf{y}\land \mathsf{z}\oplus \mathsf{z}\land \mathsf{x}$
can be distributively computed by
\begin{align*}
\mathsf{Maj}_3\left(\bigoplus_{i=1}^n \mathsf{x}^i, \bigoplus_{i=1}^n \mathsf{y}^i, \bigoplus_{i=1}^n \mathsf{z}^i\right)
&=
\mathsf{CQ}_{2n}(\mathsf{x}, \mathsf{y}) \oplus \mathsf{CQ}_n(\mathsf{x}) \oplus \mathsf{CQ}_n(\mathsf{y})\\
&\quad \oplus\mathsf{CQ}_{2n}(\mathsf{y}, \mathsf{z}) \oplus \mathsf{CQ}_n(\mathsf{y}) \oplus \mathsf{CQ}_n(\mathsf{z})\\
&\quad \oplus\mathsf{CQ}_{2n}(\mathsf{z}, \mathsf{x}) \oplus \mathsf{CQ}_n(\mathsf{z}) \oplus \mathsf{CQ}_n(\mathsf{x})\\
&=\mathsf{CQ}_{2n}(\mathsf{x}, \mathsf{y}) \oplus \mathsf{CQ}_{2n}(\mathsf{y}, \mathsf{z}) \oplus \mathsf{CQ}_{2n}(\mathsf{z}, \mathsf{x}).
\end{align*}
Hence, $\mathsf{Maj}_3$ can be distributively computed with bias $2^{-3/2}$ in quantum theory.
This argument explains techniques in~\cite{PhysRevA.77.032120} in the algebraic way.
As another example, an $n$-partite distributive complete quadratic function can be computed by
\begin{align*}
\mathsf{CQ}_{m}\left(\bigoplus_{i=1}^n \mathsf{x}^i_1, \bigoplus_{i=1}^n \mathsf{x}^i_2, \dotsb, \bigoplus_{i=1}^n \mathsf{x}^i_{m}\right)
&=
\mathsf{CQ}_{nm}(\mathsf{x}) \oplus \bigoplus_{j=1}^m\mathsf{CQ}_n(\mathsf{x}_j)
\end{align*}
with bias $2^{-(m+1)/2}$ in quantum theory.

\section{Discussions on \NMQCp with bounded error}\label{sec:error}
Obviously, bounded-error computational complexities are much more important than zero-error computational complexities in general.
We can consider two models of bounded-error \NMQCp.
In the first setting, we consider the best \NMQCp algorithm for given input distribution.
In this case, from Werner and Wolf's theorem~\eqref{eq:werner}, it is sufficient to consider Werner and Wolf's \NMQCp algorithms, which use shared generalized GHZ state and particular type of local measurements.
The required number of qubits is represented by the following approximate periodic Fourier sparsity.

\begin{mydefinition}
For a Boolean function $f\colon\{+1,-1\}^n\to\{+1,-1\}$, an input distribution $\mu$ on $\{+1,-1\}^n$ and an error probability $\epsilon\in[0,1/2)$,
an approximate periodic Fourier sparsity $\widetilde{\pfs}_{\mu,\epsilon}(f)$ is defined by
\begin{equation*}
\widetilde{\pfs}_{\mu,\epsilon}(f) :=
\min\left\{\pfs(g) \mid
g\colon\{+1,-1\}^n\to[-1,+1],\,
\mathbb{E}_{x\sim\mu}[|f(x)-g(x)|] \le 2\epsilon\right\}.
\end{equation*}
\end{mydefinition}

Let $F\colon\{+1,-1\}^n\to\{+1,-1\}$ be a probabilistic Boolean function such that $F(x)=+1$ with probability $(1+g(x))/2$.
Since $|f(x) - g(x)| = 1 - f(x)g(x)$, $\Pr_{F, x\sim \mu}(F(x) \ne f(x)) = \mathbb{E}_{F,x\sim\mu}[(1-F(x)f(x))/2] = \mathbb{E}_{x\sim\mu}[(1-g(x)f(x))/2] = \mathbb{E}_{x\sim\mu}[|f(x)-g(x)|/2]\le\epsilon$.
Hence, $\widetilde{\pfs}_{\mu,\epsilon}(f)$ is equal to the required number of qubits for computing $f$ with error probability at most $\epsilon$ on input distribution $\mu$.
From the minimax theorem, $\max_\mu \widetilde{\pfs}_{\mu,\epsilon}(f)$ is equal to the number of required qubits for probabilistic \NMQCp algorithm computing $f$ with error probability at most $\epsilon$ for any input.
Here, a probabilistic \NMQCp algorithm has to randomly choose a set of parities used in the computation.
Hence, this computational model does not necessarily have a deterministic linear side-processor.
In the second setting, we consider (not probabilistic) general \NMQCp algorithms with error probability at most $\epsilon$ for any input.
For a fixed set $\mathcal{T}\subseteq 2^{[n]}\setminus\{\varnothing\}$ of parities used in \NMQCp, we can apply the minimax theorem.
Hence, there is an \NMQCp algorithm using parities in $\mathcal{T}$ which computes $f$ with error probability at most $\epsilon$ for any input
if and only if there exist random variables $(\Phi_S\in\mathbb{R})_{S\in\mathcal{T}\cup\{\varnothing\}}$ such that
\begin{equation}\label{eq:apfs}
\left|f(x) - \mathbb{E}_{\Phi}\left[\cos\left(\Phi_\varnothing+\sum_{S\in\mathcal{T}}\Phi_S \prod_{i\in S}x_i\right)\right]\right| \le 2\epsilon
\end{equation}
for any $x\in\{+1,-1\}^n$.
Note that a corresponding \NMQCp algorithm uses generalized GHZ state and local measurements $\mathbb{E}_\Phi[\cos(\pi(\Phi_S\prod_{i\in S} x_i + \Phi_\varnothing / |\mathcal{T}|))X + \sin(\pi(\Phi_S\prod_{i\in S} x_i + \Phi_\varnothing / |\mathcal{T}|))Y]$
for $S\in\mathcal{T}$.
The required number of qubits is represented by the following approximate periodic Fourier sparsity.

\begin{mydefinition}
For a Boolean function $f\colon\{+1,-1\}^n\to\{+1,-1\}$ and an error probability $\epsilon\in[0,1/2)$,
an approximate periodic Fourier sparsity $\widetilde{\pfs}_{\epsilon}(f)$ is defined by minimum $|\mathcal{T}|$
among all $\mathcal{T}\subseteq 2^{[n]}\setminus\{\varnothing\}$ satisfying~\eqref{eq:apfs}.
\end{mydefinition}

From the above argument, $\max_\mu \widetilde{\pfs}_{\mu,\epsilon}(f) \le \widetilde{\pfs}_\epsilon(f)$.
Note that in the context of query complexity, approximate $\mathbb{R}$-degree with respect to the infinity norm plays a similar role of approximate Fourier sparsities although it only gives a lower bound of quantum query complexity~\cite{beals2001quantum}.
Deriving lower bounds of these approximate periodic Fourier sparsities is an open problem.
A connection between the number of binary digits of approximate periodic Fourier representation and $\mathbb{F}_2$-degree of $f$ can be obtained similarly to Lemma~\ref{lem:bdigits}.
\begin{mytheorem}\label{thm:lower}
Assume that a Boolean function $f\colon\{+1,-1\}^n\to\{+1,-1\}$ satisfying \eqref{eq:apfs} for $\mathcal{T}=2^{[n]}\setminus\{\varnothing\}$ using random variables $(\Phi_S\in\mathbb{R}_\ell)_{S\subseteq[n]}$.
Then, there is a probabilistic polynomial $p$ of $\mathbb{F}_2$-degree at most $2^{\ell}-1$ satisfying $\Pr_p(p(x) \ne f(x))\le \epsilon$ for any $x\in\{+1,-1\}^n$.
\end{mytheorem}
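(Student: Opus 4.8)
The plan is to adapt the argument of Lemma~\ref{lem:bdigits} to the approximate, probabilistic setting. Recall that in Lemma~\ref{lem:bdigits} one exploits the identity $\sum_S \phi_S\prod_i x_i = 2t(x) + f(\mathsf{x})$ to express the $\mathbb{F}_2$-coefficient $c_S$ in terms of $2^{|S|}\sum_{T\subseteq\bar S}\phi_T \bmod 2$, and then uses $\phi_T\in\mathbb{R}_\ell$ to kill all coefficients with $|S|>\ell$. In the approximate case we no longer have an exact representation, so instead I would take each realization of the random variables $(\Phi_S)$, multiply everything by $2^\ell$ to land in the integers, reduce modulo $2^{\ell+1}$, and extract from $\cos$ the information whether the resulting integer is ``close to $0$'' or ``close to $2^\ell$'', i.e.\ the sign. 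Concretely, for a fixed realization $\phi=(\phi_S)$, define the integer-valued function $m_\phi(x) := 2^\ell\sum_{S}\phi_S\prod_{i\in S}x_i$ and set $p_\phi(x) := \operatorname{round}\!\bigl(m_\phi(x)/2^\ell\bigr)\bmod 2$, a Boolean function. The key point is that $p_\phi$ depends only on $m_\phi(x)\bmod 2^{\ell+1}$, which is an affine-linear function of the monomials $\prod_{i\in S}x_i$ with coefficients in $\mathbb{Z}/2^{\ell+1}\mathbb{Z}$; hence $p_\phi$ is a function of $\ell+1$ bits $b_1(x),\dots,b_{\ell+1}(x)$ where $b_j(x)$ is the $j$-th bit of $m_\phi(x)$. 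Since each $b_j$ is an $\mathbb{F}_2$-polynomial in the $\mathsf{x}_i$ of $\mathbb{F}_2$-degree at most $\ldots$ — this is where the carry analysis enters — and $p_\phi$ is some Boolean function of the $b_j$'s, one gets a bound on $\deg_{\mathbb{F}_2}(p_\phi)$.

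The cleaner way to get exactly the bound $2^\ell-1$ is to mimic the computation in Lemma~\ref{lem:bdigits} directly at the level of $\mathbb{F}_2$-coefficients: for the Boolean function $p_\phi$ one has $c_S(p_\phi) \equiv 2^{|S|}\sum_{T\subseteq\bar S}\phi_T \pmod 2$ in the appropriate sense once we absorb the factor $2^\ell$, so $c_S(p_\phi)=0$ whenever $2^{|S|}\cdot 2^{-\ell}$ is an integer multiple of $2$, i.e.\ whenever $|S|\ge\ell+1$; this would give $\deg_{\mathbb{F}_2}(p_\phi)\le\ell$, not $2^\ell-1$. To recover the sharper $2^\ell-1$ I would instead use, as in the passage from Lemma~\ref{lem:bdigits} to Theorem~\ref{thm:main}, the combination with Lemma~\ref{lem:gran}: the rounding function from $\mathbb{R}_\ell$-valued affine forms to a bit is, on each linear subspace of monomials, a function computable with periodic Fourier sparsity at most $2^\ell-1$ by the granularity bound, and Lemma~\ref{lem:bdigits} then pins the $\mathbb{F}_2$-degree. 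More precisely, I would argue that for each realization $\phi$, the Boolean function $p_\phi(x) := \tfrac12\bigl(1-\operatorname{sgn}\cos(\pi\sum_S\phi_S\prod_i x_i)\bigr)$ (defined to be, say, the nearest Boolean value, breaking ties arbitrarily, so this is well-defined off a measure-zero set of $\phi$) has a periodic Fourier representation with coefficients in $\mathbb{R}_\ell$ — namely the very same coefficients $\phi_S$ — hence by Lemma~\ref{lem:bdigits} satisfies $\deg_{\mathbb{F}_2}(p_\phi)\le\ell$; but more is true because the sparsity of that representation is at most $2^n-1$ and, restricting to the Fourier support, one can bound it and invoke Lemma~\ref{lem:gran} in reverse to conclude $\deg_{\mathbb{F}_2}(p_\phi)\le 2^\ell-1$. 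Then set $p := p_\Phi$ as a probabilistic polynomial: by the hypothesis $\bigl|f(x)-\mathbb{E}_\Phi[\cos(\Phi_\varnothing+\sum_S\Phi_S\prod_i x_i)]\bigr|\le 2\epsilon$, and since $\cos(\cdot)$ and $1-2p_\phi(x)$ have the same sign while $|\cos|\le 1$, a short averaging argument shows $\Pr_\Phi(p_\Phi(x)\ne f(x)) = \mathbb{E}_\Phi[\tfrac12(1-(1-2p_\Phi(x))f(x))] \le \mathbb{E}_\Phi[\tfrac12(1-\cos(\cdots)f(x))]\le\epsilon$ for every $x$.

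So the steps in order are: (1) fix a realization $\phi$ and define the Boolean function $p_\phi$ as the sign-rounding of the cosine; (2) show $p_\phi$ has a periodic Fourier representation with coefficients in $\mathbb{R}_\ell$ (literally reuse $\phi$), apply Lemma~\ref{lem:gran}-type reasoning on its support to get sparsity-and-granularity control, and conclude via Lemma~\ref{lem:bdigits} that $\deg_{\mathbb{F}_2}(p_\phi)\le 2^\ell-1$; (3) bound the error of the probabilistic polynomial $p:=p_\Phi$ pointwise using $\operatorname{sgn}\cos = 1-2p_\phi$ and $|\cos|\le 1$ together with the approximation hypothesis~\eqref{eq:apfs}. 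The main obstacle is step (2): making precise the claim that sign-rounding preserves the degree bound $2^\ell-1$ rather than merely $\ell$, and handling the measure-zero tie set where $\cos = 0$ (which can be dealt with by perturbing $\phi$ infinitesimally, or by noting that ties contribute $0$ to the cosine expectation and so can be resolved either way without affecting the error bound). A secondary subtlety is that the random variables $\Phi_S$ are a priori real-valued with only the support restricted to $\mathbb{R}_\ell$ after conditioning, so one should either assume (as the theorem statement does) $\Phi_S\in\mathbb{R}_\ell$ almost surely, or first discretize.
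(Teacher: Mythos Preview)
Your proposal contains two genuine gaps.

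\textbf{Step (3) is incorrect.} The inequality $(1-2p_\phi(x))f(x)\ge \cos(\cdots)\,f(x)$ fails precisely when $\operatorname{sgn}\cos(\cdots)\ne f(x)$: in that case the left side is $-1$ while the right side is $-|\cos(\cdots)|\ge -1$, so the inequality goes the wrong way. Concretely, if $f(x)=+1$ and the random cosine equals $-\delta$ with probability $q$ and $+1$ with probability $1-q$ (both values occur for suitable $\Phi\in\mathbb{R}_\ell$), then $\mathbb{E}[\cos]=1-q(1+\delta)\ge 1-2\epsilon$ allows $q$ close to $2\epsilon$, yet your error $\Pr(\operatorname{sgn}\cos\ne f)=q$ can exceed $\epsilon$. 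Sign-rounding only gives error $\le 2\epsilon$, not $\epsilon$. The paper avoids this by introducing \emph{additional} randomness: a uniform $Z\in[0,1]$, and setting the output bit to $0$ iff $(1+\cos(\cdots))/2>Z$. This makes the conditional expectation of $(-1)^{\text{output}}$ equal to $\cos(\cdots)$ \emph{exactly}, so after averaging over $\Phi$ the error is $\tfrac12(1-f(x)\mathbb{E}_\Phi[\cos])\le\epsilon$. Note also that ties $\cos=0$ are \emph{not} measure-zero here: since each $\Phi_S\in\mathbb{R}_\ell$, the argument is a multiple of $\pi/2^\ell$ and $\cos=0$ can occur with positive probability.

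\textbf{Step (2) via Lemma~\ref{lem:bdigits} does not work.} Your $p_\phi$ is the \emph{sign} of the cosine, not the cosine itself, so $p_\phi(x)\ne\cos(\pi\sum_S\phi_S\prod_i x_i)$ in general and the coefficients $\phi$ do \emph{not} give a periodic Fourier representation of the Boolean function $p_\phi$; Lemma~\ref{lem:bdigits} is inapplicable, and ``Lemma~\ref{lem:gran} in reverse'' has no content here. Your first-paragraph idea is the right one and is exactly what the paper does: write $m_\phi(x)=2^\ell\sum_S\phi_S\prod_i x_i$, observe that the output (for fixed $Z$) depends only on the bits $y_0,\dots,y_\ell$ of $m_\phi(x)\bmod 2^{\ell+1}$, and bound $\deg_{\mathbb{F}_2}(y_i)$. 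The ``carry analysis'' you left as ``$\ldots$'' is the key lemma you are missing: $\mathsf{LSB}^{j}\bigl(\sum_i \mathsf{x}_i\bigr)$ has $\mathbb{F}_2$-degree exactly $2^{j-1}$ (a consequence of Lucas' theorem, proved in the paper as Lemma~\ref{lem:lsb}). After a change of variables this gives $\deg_{\mathbb{F}_2}(y_i)\le 2^{i-1}$ for $i\ge 1$ and $y_0$ constant, so any Boolean combination of the $y_i$ has $\mathbb{F}_2$-degree at most $\sum_{i=1}^\ell 2^{i-1}=2^\ell-1$. That is where the bound $2^\ell-1$ actually comes from; it is not obtained from Lemmas~\ref{lem:bdigits} and~\ref{lem:gran}.
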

\begin{proof}
We will construct a probabilistic polynomial of $\mathbb{F}_2$-degree at most $2^\ell-1$ which is equal to 0 with probability exactly equal to $(1+\cos(\pi\sum_{S\subseteq[n]}\phi_S\prod_{i\in S}x_i)/2$ for each realization $(\phi_S)_{S\subseteq[n]}$ of $(\Phi_S)_{S\subseteq[n]}$.
For each $S\subseteq[n]$, we define $\bar{\phi}_S := 2^\ell \phi_S$, which is guaranteed to be an integer from the assumption.
$\cos\bigl(\pi\sum_{S\subseteq[n]} \phi_S \prod_{i\in S} x_i\bigr)$ is equal to
$\cos\bigl(\pi\bigl(-\sum_{S\subseteq[n]} \phi_S \prod_{i\in S} x_i + 2k\bigr)\bigr)$ for any integer $k$.
Let $k = \lceil \frac12\sum_{S\subseteq[n]} \phi_S\rceil$ so that $2k - \sum_{S\subseteq[n]} \phi_S\ge 0$.
Then, $\cos\bigl(\pi\bigl(\sum_{S\subseteq[n]} \phi_S \prod_{i\in S} x_i\bigr)\bigr)$ is determined by
$y_i(x) := \mathsf{LSB}^{i+1}\left(-\sum_{{S\subseteq[n]}} \bar{\phi}_S \prod_{i\in S} x_i + k 2^{\ell+1}\right)$ for $i=0,1,\dotsc,\ell$.
Obviously, $y_0$ is a constant function.
For $i \ge 1$,
\begin{align*}
y_i(x)&=\mathsf{LSB}^{i+1}\left(-\sum_{S\subseteq[n]} \bar{\phi}_S \prod_{i\in S} x_i + k2^{\ell+1}\right)
=
\mathsf{LSB}^{i+1}\left(\sum_{S\subseteq[n]} \bar{\phi}_S \left(2\bigoplus_{i\in S} \mathsf{x}_i - 1\right) + k2^{\ell+1}\right)\\
&=\mathsf{LSB}^{i+1}\left(2\sum_{S\subseteq[n]} \bar{\phi}_S \bigoplus_{i\in S} \mathsf{x}_i + \left(k2^{\ell+1}- \sum_{S\subseteq[n]} \bar{\phi}_S\right)\right)\\
&=\mathsf{LSB}^{i}\left(\sum_{S\subseteq[n]} \bar{\phi}_S \bigoplus_{i\in S} \mathsf{x}_i + \left\lfloor\frac{k2^{\ell+1}- \sum_{S\subseteq[n]} \bar{\phi}_S}2\right\rfloor\right).
\end{align*}
We can obtain an explicit $\mathbb{F}_2$-polynomial representation of $y_i(x)$ by using the following lemma.

\begin{mylemma}\label{lem:lsb}
For $\ell\ge 1$,
\begin{equation*}
\mathsf{LSB}^{\ell}\left(\sum_{i\in[n]} \mathsf{x}_i\right)
= 
\bigoplus_{1\le i_1<i_2<\dotsb<i_{2^{\ell-1}}\le n} \bigwedge_{j=1}^{2^{\ell-1}} \mathsf{x}_{i_j}.
\end{equation*}
\end{mylemma}
The proof is shown in Appendix~\ref{apx:lsb}.
From Lemma~\ref{lem:lsb},
$y_i(x)$ has $\mathbb{F}_2$-degree at most $2^{i-1}$ for $i\ge 1$.
Then, $\cos(\sum_{S\subseteq[n]} \phi_S \prod_{i\in S} x_i) = \cos(\sum_{i=0}^\ell 2^{-i}y_{\ell-i}(x))$.
Let $Z$ be a random variable uniformly distributed in $[0,1]$ which is a randomness used in the probabilistic polynomial.
Let
\begin{equation*}
S :=
\begin{cases}
0,& \text{if } \frac{1+\cos\left(\pi\sum_{i=0}^\ell 2^{-i} y_{\ell-i}(x)\right)}2 > Z\\
1,& \text{otherwise.}
\end{cases}
\end{equation*}
Then, $S$ is equal to 0 with probability
\begin{equation*}
\frac{1+\cos\left(\pi\sum_{i=0}^\ell 2^{-i} y_{\ell-i}\right)}2
=
\frac{1+\cos\left(\pi\sum_{S\subseteq[n]}\phi_S \prod_{i\in S} x_i\right)}2
\end{equation*}
Since $y_i(x)$ have $\mathbb{F}_2$-degree at most $2^{i-1}$ for $i\in[\ell]$,
a probabilistic polynomial computing $S$ has $\mathbb{F}_2$-degree at most $2^{\ell-1} + 2^{\ell-2}+\dotsb +2^0 = 2^{\ell}-1$.
By using the randomness of $(\Phi_S)_{S\subseteq[n]}$,
we obtain a probabilistic polynomial of $\mathbb{F}_2$-degree at most $2^\ell - 1$ with error probability at most $\epsilon$.
\end{proof}
Hence, a Boolean function which can be computed with bounded error by \NMQCp with constant number of binary digits can be approximated by probabilistic $\mathbb{F}_2$-polynomial of constant degree.

\section*{Acknowledgment}
This work was supported by JST PRESTO Grant Number JPMJPR1867 and JSPS KAKENHI Grant Number JP17K17711.

\section*{References}
\renewenvironment{thebibliography}[1]
        {\frenchspacing
         \small\rm\baselineskip=11pt
         \begin{list}{\arabic{enumi}.}
        {\usecounter{enumi}\setlength{\parsep}{0pt}     
         \setlength{\leftmargin}{17pt}  
                \setlength{\rightmargin}{0pt}
         \setlength{\itemsep}{0pt} \settowidth
          {\labelwidth}{#1.}\sloppy}}{\end{list}}

\bibliographystyle{plain}

\begin{appendices}
\section{Non-adaptive measurement-based quantum computation and circuit model}\label{apx:circuit}
Here, we give a graphical proof that~\eqref{eq:werner} can be achieved by generalized GHZ state and local measurements.
Let $H$ be the $2\times 2$ Hadamard transform and $R_z(\theta) := \ket{0}\bra{0} + \mathrm{e}^{i\theta}\ket{1}\bra{1}$.
It is easy to confirm the following lemma.
\begin{mylemma}[\cite{v001a005}]
\begin{equation*}
\bigl|\bra{0}HR_z(\theta)H\ket{0}\bigr|^2
= \frac{1+\cos(\theta)}2.
\end{equation*}
\end{mylemma}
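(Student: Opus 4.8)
The plan is to verify the identity by a direct computation in the computational basis; no machinery beyond the definitions of $H$ and $R_z$ is required. First I would push the state $\ket{0}$ through the three gates in order: $H\ket{0} = (\ket{0}+\ket{1})/\sqrt{2}$, then $R_z(\theta)$ multiplies the $\ket{1}$ component by $\mathrm{e}^{i\theta}$, giving $(\ket{0}+\mathrm{e}^{i\theta}\ket{1})/\sqrt{2}$, and a final Hadamard yields $HR_z(\theta)H\ket{0} = \tfrac12\bigl((1+\mathrm{e}^{i\theta})\ket{0} + (1-\mathrm{e}^{i\theta})\ket{1}\bigr)$. Taking the overlap with $\ket{0}$ then gives $\bra{0}HR_z(\theta)H\ket{0} = (1+\mathrm{e}^{i\theta})/2$.

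It then remains to compute the squared modulus. I would factor out the half-angle, $1+\mathrm{e}^{i\theta} = \mathrm{e}^{i\theta/2}\bigl(\mathrm{e}^{-i\theta/2}+\mathrm{e}^{i\theta/2}\bigr) = 2\cos(\theta/2)\,\mathrm{e}^{i\theta/2}$, so that $\bigl|(1+\mathrm{e}^{i\theta})/2\bigr|^2 = \cos^2(\theta/2) = (1+\cos\theta)/2$ by the double-angle formula. Equivalently one can just expand $|1+\mathrm{e}^{i\theta}|^2 = (1+\cos\theta)^2 + \sin^2\theta = 2+2\cos\theta$ and divide by $4$. Either route closes the argument.

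There is no genuine obstacle here: the statement is an elementary single-qubit identity, and the only bookkeeping that needs care is the factors of $1/\sqrt{2}$ from the two Hadamards and the sign convention chosen for $R_z(\theta)$. Conceptually, the lemma records exactly the fact needed in the appendix, namely that a Hadamard–phase–Hadamard sandwich produces an outcome bias $\cos(\theta)$ on the measured qubit, which is precisely what realizes the local measurements $\cos(\cdot)X+\sin(\cdot)Y$ appearing in Werner and Wolf's characterization~\eqref{eq:werner}.
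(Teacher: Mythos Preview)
Your proof is correct; the paper itself does not give a proof but merely says ``It is easy to confirm the following lemma,'' so your direct computation is precisely the verification the paper leaves to the reader. There is nothing to compare.
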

From this lemma, the bias in~\eqref{eq:werner} is obtained by the following one-qubit circuit.
\[
\Qcircuit @C=.8em @R=0em @!R {
\lstick{\ket{0}} & \gate{H} & \gate{R_z(\theta_1 x^{S_1})} & \qw & \push{\,\dotsm\,} \qw & \qw & \gate{R_z(\theta_k x^{S_k})}& \gate{H} & \meter
}
\]
where $x^S:=\prod_{i\in S}x_i$.
By using fan-out gates~\cite{green2002counting}, \cite{v001a005}, we obtain the following equivalent circuit.
\[
\Qcircuit @C=.8em @R=0.3em @!R {
\lstick{\ket{0}}& \gate{H} & \ctrl{3} & \gate{\textover[c]{$R_z(\theta_1 x^{S_1})$}{$R_z(\theta_k x^{S_k})$}} &  \ctrl{3} & \gate{H} & \meter\\
\lstick{\ket{0}} & \qw      & \targ    & \gate{\textover[c]{$R_z(\theta_2 x^{S_2})$}{$R_z(\theta_k x^{S_k})$}} &  \targ    & \qw & \rstick{\ket{0}} \qw \\
 &       &     & \vdots &     &  &  \\
\lstick{\ket{0}} & \qw      & \targ    & \gate{R_z(\theta_k x^{S_k})} &  \targ    & \qw & \rstick{\ket{0}} \qw\\
}
\]
This circuit is useful when we consider computations by \QNCfz circuit~\cite{v001a005}, \cite{Takahashi2016}.
Since the fan-out gate is equivalent to the Mod2 gate conjugated by Hadamard gates, we obtain the following equivalent circuit.
\[
\Qcircuit @C=.8em @R=0.3em @!R {
\lstick{\ket{0}}& \gate{H} & \ctrl{3} & \gate{\textover[c]{$R_z(\theta_1 x^{S_1})$}{$R_z(\theta_k x^{S_k})$}} &  \gate{H} & \gate{2} & \qw & \meter\\
\lstick{\ket{0}} & \qw      & \targ    & \gate{\textover[c]{$R_z(\theta_2 x^{S_2})$}{$R_z(\theta_k x^{S_k})$}} &  \gate{H} & \ctrl{-1}    & \gate{H} & \rstick{\ket{0}} \qw \\
 &       &     & \vdots &     &  &  \\
\lstick{\ket{0}} & \qw      & \targ    & \gate{R_z(\theta_k x^{S_k})} &  \gate{H} & \ctrl{-2}    & \gate{H} & \rstick{\ket{0}} \qw\\
}
\]
Here, we can measure all qubits before the Mod2 gate is applied and output XOR of measurement outcomes.
\[
\Qcircuit @C=.8em @R=0.3em @!R {
\lstick{\ket{0}}& \gate{H} & \ctrl{3} & \gate{\textover[c]{$R_z(\theta_1 x^{S_1})$}{$R_z(\theta_k x^{S_k})$}} &  \gate{H} & \meter & \rstick{c_1}\cw\\
\lstick{\ket{0}} & \qw      & \targ    & \gate{\textover[c]{$R_z(\theta_2 x^{S_2})$}{$R_z(\theta_k x^{S_k})$}} &  \gate{H} & \meter & \rstick{c_2}\cw\\
 &       &     & \vdots &     &  &  \\
\lstick{\ket{0}} & \qw      & \targ    & \gate{R_z(\theta_k x^{S_k})} &  \gate{H} &  \meter & \rstick{c_k}\cw
}
\]
The last circuit is equivalent to a setting of \NMQCp using shared generalized GHZ state. The first Hadamard gate and fan-out gate generate the generalized GHZ state and following gates and measurements corresponding to
non-adaptive measurements of qubits.
Since the computation needs only one qubit on a circuit model, it can be simulated by $\TC^0$ circuit as mentioned in~\cite{green2002counting}.
Note that we can also directly prove this by simulating periodicity of cosine function by threshold circuits.
Computational powers of several types of MBQC are summarized in Table~\ref{tbl:mbqc}.

\begin{table}[t]
\caption{
Computational power of measurement-based quantum computation with linear side-processing.
}
\label{tbl:mbqc}
\begin{center}
\begin{tabular}{|c|c|c|c|}
\hline
Side processor & State & Adaptivity & Complexity\\
\hline
Linear & Cluster state & Adaptive & \BQP~\cite{PhysRevA.68.022312}\\
Linear & Tripartite GHZ state & Adaptive & \P~\cite{PhysRevLett.102.050502}\\
Linear & Generalized GHZ state & Non-adaptive & $\subseteq \TC^0$\\
\hline
\end{tabular}
\end{center}
\end{table}

\section{Fourier representation of $\mathsf{Mod}^3_n$}\label{apx:mod3}
\begin{mylemma}
For $n$ divisible by 3, $\widehat{\mathsf{Mod}^{3}_{n}}(S)$ is in $\mathbb{R}_{n-2}\setminus\mathbb{R}_{n-3}$ for all $S\subseteq[n]$ of even size and is equal to zero for all $S$ of odd size.
For $n$ not divisible by 3, $\widehat{\mathsf{Mod}^{3}_{n}}(S) \in \mathbb{R}_{n-1}\setminus\mathbb{R}_{n-2}$ for all non-empty $S\subseteq[n]$.
\end{mylemma}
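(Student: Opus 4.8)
The plan is to compute the Fourier coefficients of $\mathsf{Mod}^3_n$ in closed form and then read off their $2$-adic valuations. First I would regard the function as the $\{+1,-1\}$-valued $f(x) = 1 - 2\,\mathsf{Mod}^3_n(\mathsf{x})$ and apply the root-of-unity filter $\mathsf{Mod}^3_n(\mathsf{x}) = \frac{1}{3}\sum_{j=0}^{2}\omega^{j\sum_{i\in[n]}\mathsf{x}_i}$ with $\omega := \mathrm{e}^{2\pi i/3}$. Since $f$ is symmetric, $\widehat{f}(S)$ depends only on $k := |S|$; writing $\prod_{i\in S}x_i = (-1)^{\sum_{i\in S}\mathsf{x}_i}$ and factoring the sum over $\mathsf{x}\in\{0,1\}^n$ coordinate by coordinate, one obtains, for $k\ge 1$,
\[
\widehat{f}(S) = -\frac{2}{3\cdot 2^{n}}\sum_{j=0}^{2}(1-\omega^{j})^{k}\,(1+\omega^{j})^{n-k},
\]
with $S=\varnothing$ handled separately (an extra $+1$ coming from $\mathbb{E}[\prod_{i\in\varnothing}x_i]=1$).

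Next I would pass to polar form, using $1-\omega = \sqrt{3}\,\mathrm{e}^{-i\pi/6}$ and $1+\omega = \mathrm{e}^{i\pi/3}$. The $j=2$ summand is then the complex conjugate of the $j=1$ summand, and the $j=0$ summand vanishes for $k\ge 1$, so the sum collapses to a single real cosine and a short computation gives
\[
\widehat{f}(S) = -\frac{3^{k/2}}{3\cdot 2^{\,n-2}}\,\cos\!\left(\frac{(3k-2n)\pi}{6}\right),\qquad k=|S|\ge 1.
\]

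Finally I would evaluate this cosine --- whose argument is a multiple of $\pi/6$ --- by cases on $k\bmod 2$ and $n\bmod 3$. When $3\mid n$ it equals $\pm 1$ for $k$ even and $0$ for $k$ odd, which immediately yields the vanishing on odd-size sets; when $3\nmid n$ it equals $\pm\tfrac12$ for $k$ even and $\pm\tfrac{\sqrt 3}{2}$ for $k$ odd. In each nonzero case the factor $\sqrt3$, when present, merges with $3^{k/2}$ to leave a nonnegative integer power of $3$ --- hence an odd integer --- divided by a power of $2$: one finds $\widehat{f}(S)=\pm 3^{k/2-1}/2^{\,n-2}$ for $3\mid n$ and $k$ even, $\widehat{f}(S)=\pm 3^{k/2-1}/2^{\,n-1}$ for $3\nmid n$ and $k$ even, and $\widehat{f}(S)=\pm 3^{(k-1)/2}/2^{\,n-1}$ for $3\nmid n$ and $k$ odd. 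Reading off the exact power of $2$ in each denominator gives the stated ranges for $k\ge 1$, and the case $S=\varnothing$ --- needed only when $3\mid n$ --- follows from the direct evaluation $\mathbb{E}[f]=(2^{\,n-2}-(-1)^{n/3})/(3\cdot 2^{\,n-2})$, whose numerator is odd for $n\ge 3$.

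I expect the only real work to be this last step's $2$-adic bookkeeping: one must verify that after cancelling $\sqrt3$ against $3^{k/2}$ the numerator is genuinely an odd integer --- so the valuation is \emph{exactly} $-(n-2)$ or $-(n-1)$ and not smaller in absolute value --- and that the low-dimensional edge cases ($k=0,1$ and small $n$) land inside $\mathbb{R}_{n-2}\setminus\mathbb{R}_{n-3}$ and $\mathbb{R}_{n-1}\setminus\mathbb{R}_{n-2}$ as claimed.
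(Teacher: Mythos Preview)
Your proposal is correct and follows essentially the same idea as the paper's proof: both compute the Fourier coefficients explicitly via cube roots of unity. The paper writes $\mathsf{Mod}^3_n$ through the real part of $\prod_{i}\frac{-1+x_i\sqrt{-3}}{2}$ (with an extra factor $\frac{-1\mp\sqrt{-3}}{2}$ when $3\nmid n$) and expands the product to read off the Fourier coefficients, whereas you reach the same values by applying the roots-of-unity filter, factoring the expectation coordinatewise, and reducing to the single cosine $-\tfrac{3^{k/2}}{3\cdot 2^{n-2}}\cos\!\bigl(\tfrac{(3k-2n)\pi}{6}\bigr)$; the subsequent case analysis and $2$-adic bookkeeping are equivalent.
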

\begin{proof}
Assume that $n$ is divisible by 3.
Then, $2\mathrm{Re}\left(\prod_{i=1}^n\left(\frac{-1+x_i\sqrt{-3}}2\right)\right)$ is 2 if $\sum_{i\in[n]}\mathsf{x}_i$ is divisible by 3, and -1 if not where $\mathrm{Re}(\omega)$ is the real part of a complex number $\omega$.
Hence,
\begin{align*}
\mathsf{Mod}^{3}_n(x)
&= \frac13\left(-4\mathrm{Re}\left(\prod_{i=1}^n\left(\frac{-1+x_i\sqrt{-3}}2\right)\right) + 1\right)\\
&= \frac13\left(\frac{(-1)^{n+1}}{2^{n-2}}\mathrm{Re}\left(\sum_{S\subseteq[n]} \prod_{i\in S}(-x_i\sqrt{-3})\right) + 1\right)\\
&= \frac13\left(\frac{(-1)^{n+1}}{2^{n-2}}\sum_{\substack{S\subseteq[n]\\ |S|\text{ is even}}} \prod_{i\in S}(x_i\sqrt{-3}) + 1\right)\\
&= \frac13 +\frac{(-1)^{n+1}}{2^{n-2}3}\sum_{\substack{S\subseteq[n]\\ |S|\text{ is even}}} (-3)^{|S|/2}\prod_{i\in S}x_i.
\end{align*}
Hence, Fourier coefficients for $S$ of odd size are zero and those for $S$ of even size are in $\mathbb{R}_{n-2}\setminus\mathbb{R}_{n-3}$.

Assume that $n\equiv 1 \mod 3$.
Then, $2\mathrm{Re}\left(\left(\frac{-1-\sqrt{-3}}2\right)\prod_{i=1}^n\left(\frac{-1+x_i\sqrt{-3}}2\right)\right)$ is 2 if $\sum_{i\in[n]}\mathsf{x}_i$ is divisible by 3, and -1 if not.
Hence,
\begin{align*}
\mathsf{Mod}^{3}_n(x)
&= \frac13\left(-4\mathrm{Re}\left(\left(\frac{-1-\sqrt{-3}}2\right)\prod_{i=1}^n\left(\frac{-1+x_i\sqrt{-3}}2\right)\right) + 1\right)\\
&= \frac13\left(\frac{(-1)^{n}}{2^{n-1}}\mathrm{Re}\left((1+\sqrt{-3})\sum_{S\subseteq[n]} \prod_{i\in S}(-x_i\sqrt{-3})\right) + 1\right)\\
&= \frac13\left(\frac{(-1)^n}{2^{n-1}}\left(\sum_{\substack{S\subseteq[n]\\ |S|\text{ is even}}} \prod_{i\in S}(x_i\sqrt{-3}) - \sqrt{-3} \sum_{\substack{S\subseteq[n]\\ |S|\text{ is odd}}} \prod_{i\in S}(x_i\sqrt{-3})\right) + 1\right)\\
&= \frac13 +\frac{(-1)^n}{2^{n-1}3}\sum_{S\subseteq[n]} (-1)^{\lfloor |S|/2 \rfloor} 3^{\lfloor(|S|+1)/2\rfloor}\prod_{i\in S}x_i
\end{align*}
Hence, all Fourier coefficients for non-empty $S$ are in $\mathbb{R}_{n-1}\setminus\mathbb{R}_{n-2}$.

Assume that $n\equiv 2 \mod 3$.
Then, $2\mathrm{Re}\left(\left(\frac{-1+\sqrt{-3}}2\right)\prod_{i=1}^n\left(\frac{-1+x_i\sqrt{-3}}2\right)\right)$ is 2 if $\sum_{i\in[n]}\mathsf{x}_i$ is divisible by 3, and -1 if not.
The rest of the proof is same as that for the case $n\equiv 1\mod 3$.
\end{proof}

\section{LSB function: The proof of Lemma~\ref{lem:lsb}}\label{apx:lsb}
\begin{figure}[t]
\begin{center}
\ttfamily
\begin{tabular}{r@{\hspace{5em}}r}
0001 & 110101001110\\
0010 & 110101001111\\
0011 & 110101010000\\
\vdots & \vdots \\
1111 & 110101011100\\
10000 & 110101011101
\end{tabular}
\end{center}
\caption{
Let $n$ be $\mathtt{110101011101}$ as the binary representation and $\ell=5$.
Left: Numbers from 1 to $2^{\ell-1}$. Right: Numbers from $n-2^{\ell-1}+1$ to $n$.}
\label{fig:lsb}
\end{figure}
From
\begin{align*}
\bigoplus_{1\le i_1 < i_2 < \dotsb < i_{2^{\ell-1}}\le n} x_{i_1}\land x_{i_2}\land \dotsm \land x_{i_{2^{\ell-1}}} &= \binom{n}{2^{\ell-1}} \mod 2\\
&=\frac{n(n-1)\dotsm(n-2^{\ell-1}+1)}{2^{\ell-1}!} \mod 2
\end{align*}
we will count the numbers of factor 2s in the denominator and the numerator.
The number of factor 2s in arbitrary given integer $k$ is equal to the number of trailing zeros in the binary representation of $k$.
Hence, the number of factor 2s in $2^{\ell-1}!$ is equal to the sum of the number of trailing zeros of integers from 1 to $2^{\ell-1}$.
We do not have to count it explicitly.
Next, we count the number of factor 2s in the numerator.
In integers from $n - 2^{\ell-1} + 1$ to $n$, all of bit patterns of length $\ell-1$ appear in the least significant $\ell-1$ bits as shown in Fig.~\ref{fig:lsb}.
An important case is the all-zero case.
Let $m$ be the unique integer at least $n - 2^{\ell-1} + 1$ and at most $n$ which can be divided by $2^{\ell-1}$.
For counting the number of trailing zeros of $m$, the $\ell$-th bit of $m$ is concerned.
Here, the $\ell$-th bit of $m$ is equal to the $\ell$-th bit of $n$.
If the $\ell$-th bit of $n$ is 1, then, the number of trailing zeros of $m$ is equal to $\ell-1$.
In this case the numbers of factor 2s in the numerator and the denominator are equal.
If the $\ell$-th bit of $n$ is 0, then, the number of trailing zeros of $m$ is greater than $\ell-1$.
In this case, the number of factor 2s in the numerator is greater than that in the denominator.
This means that $\binom{n}{2^{\ell-1}} \mod 2$ is equal to $\mathrm{LSB}^\ell(\sum_i x_i)$.
An example is shown in Fig.~\ref{fig:lsb}.
\end{appendices}

\end{document}